\newtheorem{theorem}{Theorem}
\newtheorem{corollary}{Corollary}
\newtheorem{definition}{Definition}
\newcommand{\la}{\lambda}
\newcommand{\li}{\lambda_i}
\newcommand{\boldx}{\bm{x}}
\newcommand{\boldxhat}{\hat{\bm{x}}}
\newcommand{\bmc}{\bm{c}}
\newcommand{\bmd}{\bm{d}}
\newcommand{\bS}{\mathbf{S}}
\newcommand{\bH}{\mathbf{H}}
\newcommand{\bW}{\mathbf{W}}
\newcommand{\bD}{\mathbf{D}}
\newcommand{\bU}{\mathbf{U}}
\newcommand{\bL}{\mathbf{L}}
\newcommand{\bA}{\mathbf{A}}
\newcommand{\bG}{\mathbf{G}}
\newcommand{\bI}{\mathbf{I}}
\newcommand{\bV}{\mathbf{V}}
\newcommand{\bLam}{\bm{\Lambda}}
\newcommand{\mA}{\mathcal{A}}
\newcommand{\mH}{\mathcal{H}}
\newcommand{\mS}{\mathcal{S}}
\newcommand{\mT}{\mathcal{T}}
\newcommand{\mV}{\mathcal{V}}
\newcommand{\mW}{\mathcal{W}}
\newcommand{\mB}{\mathcal{B}}
\DeclareMathOperator*{\argmin}{arg\,min}
\begin{document}
\setlist[description]{font=\normalfont}

\title{Generalized Sampling on Graphs With\\Subspace and Smoothness Priors}
\author{Yuichi Tanaka and Yonina C. Eldar
\thanks{Y. Tanaka is with the Graduate School of BASE, Tokyo University of Agriculture and Technology, Koganei, Tokyo 184--8588, Japan. Y. Tanaka is also with PRESTO, Japan Science and Technology Agency, Kawaguchi, Saitama 332--0012, Japan (email: \mbox{ytnk@cc.tuat.ac.jp}).}
\thanks{Y. C. Eldar is with Faculty of Mathematics and Computer Science, The Weizmann Institute of Science, Rehovot 7610001, Israel (email: \mbox{yonina.eldar@weizmann.ac.il}).}
\thanks{Yuichi Tanaka was partially funded by JST PRESTO under grants JPMJPR1656 and JPMJPR1935, and JSPS KAKENHI under Grant 19K22864.}
}

\markboth{}{}
\maketitle

\begin{abstract}
We propose a framework for generalized sampling of graph signals that parallels sampling in shift-invariant (SI) subspaces. This framework allows for arbitrary input signals, which are not constrained to be bandlimited. Furthermore, the sampling and reconstruction filters may be different. We present design methods of the correction filter that compensate for these differences and lead to closed form expressions in the graph frequency domain. In this study, we consider two priors on graph signals: The first is a subspace prior, where the signal is assumed to lie in a periodic graph spectrum (PGS) subspace. The PGS subspace is proposed as a counterpart of the SI subspace used in standard sampling theory. The second is a smoothness prior that imposes a smoothness requirement on the graph signal. We suggest the use of recovery techniques for when the recovery filter can be optimized and under a setting in which a predefined filter must be used. Sampling is performed in the graph frequency domain, which is a counterpart of ``sampling by modulation'' used in SI subspaces. We compare our approach with existing sampling techniques on graph signal processing. The effectiveness of the proposed generalized sampling approach is validated numerically through several experiments.
\end{abstract}

\section{Introduction}
Sampling theory for graph signals has been recently studied with the goal of building parallels of sampling results in standard signal processing \cite{Anis2016, Chen2015, Marque2016, Chepur2018, Puy2018, Tsitsv2016, Valses2018, Pesens2008, Pesens2010, Sakiya2019a}. 
Since the pioneering Shannon--Nyquist sampling theorem \cite{Shanno1949, Jerri1977}, sampling theories that encompass more general signal spaces beyond that of bandlimited signals in shift-invariant (SI) spaces have been widely studied with many promising applications \cite{Eldar2009, Eldar2003, Eldar2006, Unser2000, Unser1994, Eldar2004, Eldar2015}. More relaxed priors have also been considered such as smoothness priors. These theories allow for sampling and recovery of signals in arbitrary subspaces using almost arbitrary sampling and recovery kernels. These results are particularly useful in the SI setting in which sampling and recovery reduce to simple filtering operations.

Graph signal processing (GSP) \cite{Shuman2013, Ortega2018} is a relatively new field of signal processing that studies discrete signals defined on a graph. Recent work on GSP ranges from theory to practical applications including wavelet/filter bank design \cite{Hammon2011, Narang2013, Tanaka2014a, Sakiya2019}, learning graphs from observed data \cite{Dong2016, Kalofo2016, Egilme2017, Yamada2019}, restoration of graph signals \cite{Onuki2016, Ono2015}, image/point cloud processing \cite{Cheung2018}, and deep learning on graphs \cite{Bronst2017}.

One of the topics of interest in GSP is graph sampling theory \cite{Anis2016, Chen2015, Marque2016, Chepur2018, Puy2018, Tsitsv2016, Valses2018, Pesens2008, Pesens2010, Sakiya2019a}, which is aimed at recovering a graph signal from its sampled version. Most studies on sampling of graph signals have considered recovery of discrete graph signals from their sampled version \cite{Anis2016, Chen2015, Marque2016, Chepur2018, Puy2018, Tsitsv2016, Valses2018, Sakiya2019a}. Current approaches generally rely on vertex sampling. The graph can be highly irregular, namely, the number of edges connected to a vertex may vary significantly. Hence, the ``best" sampling set depends on the graph and assumed signal model; sampling set selection with different models of signals or features has been studied extensively in sensor networks and machine learning \cite{Cressi1993, Shewry1987, Krause2008, Sharma2015}. Graph sampling theory typically assumes that the signal is smooth on the graph, in which the smoothness is frequently measured based on the number of nonzero coefficients in the graph Fourier spectrum \cite{Pesens2008, Anis2016}, allowing for perfect recovery, which will differ depending on the sampling set.

Vertex domain sampling parallels sampling of the (discrete) time domain signals in standard signal processing. However, the sampled spectrum of a graph signal does not preserve its original shape. Whereas time domain sampling has a corresponding frequency (i.e., DFT) domain representation that preserves the shape of the spectrum (up to possible aliasing) \cite{Vaidya1993, Eldar2015, Vetter2014}, vertex domain sampling does not have such a simple relationship. Instead, we propose to build an analog of standard sampling in the \textit{graph frequency domain}. Graph frequency sampling has been proposed in \cite{Tanaka2018}. Vertex and graph frequency domain sampling approaches coincide under certain conditions (see Section \ref{subsec:bpt}).

Here, we expand on \cite{Tanaka2018} by building a \textit{generalized graph sampling} framework that allows for (perfect) recovery of graph signals beyond bandlimited signals, and parallels SI sampling for time domain signals. In SI sampling, the input subspace has a particular SI structure. Sampling is modeled by uniformly sampling the output of the signal convolved with an arbitrary sampling filter. Under a mild condition on the sampling filter, recovery is obtained using a correction filter having an explicit closed-form frequency response. Herein, we demonstrate how one can extend these ideas to graphs by defining an appropriate input space of graph signals and sampling in the graph frequency domain \cite{Tanaka2018}. In addition, our generalized sampling framework enables recovery of non-bandlimited graph signals from \textit{vertex domain sampling} for bipartite graphs by applying a relationship between vertex and graph frequency domain sampling. In contrast to graph filter bank approaches \cite{Narang2012, Narang2013, Sakiya2014a, Sakiya2016a, Tanaka2014a, Tanaka2019a, Trembl2016}, our proposed generalized framework only requires one branch (i.e., channel) to recover the full-band graph signals.

Our framework relies on graph sampling performed in the graph frequency domain \cite{Tanaka2018} as a counterpart to ``sampling by modulation'' under the SI setting \cite{Eldar2015, Vetter2014}. This sampling method maintains the shape of the graph spectrum. Unlike in SI sampling, in which sampling in the time domain coincides with that in the frequency domain, under the graph setting, vertex domain sampling and frequency domain sampling are generally different. Sampling by modulation enables a generalized graph sampling framework that is analogous to SI sampling, exhibiting a symmetric structure in which the sampling and reconstruction steps contain similar building blocks as those in SI sampling. Our approach reduces to the standard SI results in the case of a graph representing the conventional time axis whose graph Fourier basis is the discrete Fourier transform (DFT).
We consider two priors on graph signals:
\begin{enumerate}
  \item \textit{Subspace prior}, where the signal lies in a known subspace characterized by a given generator; and
  \item \textit{Smoothness prior}, where the signal is smooth on a given graph.
\end{enumerate}
Both priors parallel those considered in SI sampling \cite{Eldar2009, Eldar2015}.

For the subspace prior, we define the \textit{periodic graph spectrum} (PGS) subspace that serves as a counterpart of SI subspaces. In particular, this subspace maintains the repeated graph frequency spectra of SI signals. 
In the smoothness prior, we assume that the quadratic form of the graph signal is small for a given smoothness function. Under this setting, perfect recovery is no longer possible. Nonetheless, following the work on general Hilbert space sampling, we show how to design graph filters that allow to best approximate the input signal under several different criteria \cite{Eldar2005, Hiraba2007, Dvorki2009, Eldar2015}.

Generalized sampling for standard and graph sampling paradigms allows for the use of arbitrary sampling and reconstruction filters that are not necessarily ideal low-pass filters. It also allows for fixed recovery filters that may have advantages in terms of implementation. For all settings, and under all recovery criteria considered, we show that reconstruction is given by the spectral graph filters, the response of which has a closed form solution that depends on the generator function, smoothness, and sampling/reconstruction filters.

In the context of subspace sampling with a PGS prior, our results allow for a perfect recovery of graph signals beyond those that are bandlimited for almost all signal and sampling spaces. In particular, we require such subspaces to satisfy a direct-sum (DS) condition, as found in standard generalized sampling. When the DS condition does not hold, we design a correction filter that best approximates the input under both the least-squares (LS) and minimax (MX) criteria. These criteria have been studied in the context of standard sampling. We then introduce LS and MX strategies for recovery under a smoothness prior. In all cases, the graph filters have explicit graph frequency responses that parallel those in the SI setting.

Earlier work focusing on generalized sampling of graph signals can be found in \cite{Chepur2018}. This approach is based on generalized Hilbert space sampling \cite{Eldar2009, Eldar2015} and demonstrates the possibility of perfect recovery of graph signals that are not necessarily bandlimited. However, its framework does not parallel SI sampling in general, i.e., the reconstruction matrix does not have a simple diagonal graph frequency response. Likewise, most previous studies have considered vertex domain subsampling, including many graph sampling studies \cite{Anis2016, Chen2015, Marque2016, Chepur2018, Puy2018, Tsitsv2016, Valses2018, Sakiya2019a}, resulting in different building blocks in the sampling and reconstruction steps. Our framework, by contrast, leads to simple closed form recovery methods based on graph filters for both the sampling and recovery. We expand on the similarities and differences between our study and previous approaches in Section \ref{sec:relationship}.

In our preliminary study \cite{Tanaka2019b}, we considered generalized graph sampling with a subspace prior. In this study, the results are significantly expanded by introducing an integrated framework, applying different design criteria, and further considering a smoothness prior.

The remainder of this paper is organized as follows. The notations and basics of GSP are introduced in Section \ref{sec:SGT}. Section \ref{sec:gensamp} reviews generalized sampling in Hilbert spaces and in the SI setting. A framework for generalized graph sampling is presented in Section  \ref{sec:framework}. Section \ref{sec:subspace} proposes signal recovery methods assuming a PGS subspace prior. We describe the relaxation this prior into a smoothness prior in Section \ref{sec:smoothness}. Section \ref{sec:relationship} describes the relationships between our work and existing methods. Numerical experiments are presented in Section \ref{sec:exp}. Finally, Section \ref{sec:conclusion} provides some concluding remarks regarding this research.

\section{Spectral Graph Theory and Basics of GSP}\label{sec:SGT}
We begin by reviewing graphs and their spectrum. We also introduce some basic GSP operators, like the graph Fourier transform (GFT) and filtering on graphs.

A graph $\mathcal{G}$ is represented as $\mathcal{G} = (\mathcal{V}, \mathcal{E})$, where $\mathcal{V}$ and $\mathcal{E}$ denote sets of vertices and edges, respectively. The number of vertices is given by $N=|\mathcal{V}|$ unless otherwise specified. 
We define an adjacency matrix $\mathbf{A}$ with elements
$a_{mn}$ that represents the weight of the edge between the $m$th and $n$th vertices; $a_{mn} = 0$ for unconnected vertices. The degree matrix $\bD$ is a diagonal matrix, with $m$th diagonal element  $[\bD]_{mm} = \sum_n a_{mn}$.

GSP uses different variation operators \cite{Shuman2013, Ortega2018} depending on the application and assumed signal and/or network models. Here, for concreteness, we use the graph Laplacian $\bL:=\bD-\bA$ or its symmetrically normalized version $\underline{\bL} := \bD^{-1/2} \bL \bD^{-1/2}$. The extension to other variation operators (e.g., adjacency matrix) is possible with a slight modification for properly ordering its eigenvalues as long as the graph is undirected without self-loops.
Because $\mathbf{L}$ is a real symmetric matrix, it always possesses an eigendecomposition $\mathbf{L} = \mathbf{U} \bm{\Lambda} \mathbf{U}^*$, where $\mathbf{U} = [{\bm u}_{0}, \ldots, {\bm u}_{N-1}]$ is a unitary matrix containing the eigenvectors $\bm{u}_i$, and $\bm{\Lambda} = \text{diag}(\la_0, \la_1, \ldots, \la_{N-1})$ consists of the eigenvalues $\la_i$. We refer to $\la_i$ as the \textit{graph frequency}.

A graph signal $x: \mathcal{V} \rightarrow \mathbb{C}$ is a signal that assigns a value to each vertex. It can be written as a vector $\boldx$ in which the $n$th
element $x[n]$ represents the signal value at the $n$th vertex.
The GFT is defined as
\begin{equation}
\label{ }
\hat{x}[i] = \langle{\bm u}_{i}, \bm{x}\rangle = \sum_{n=0}^{N-1}u^*_{i}[n]x[n].
\end{equation}
Our generalized sampling can also use other GFT definitions, e.g., \cite{Deri2017, Giraul2018, LeMag2018, Lu2019}, without changing the framework.

A (linear) graph filter is defined as $\bG \in \mathbb{C}^{N\times N}$. The filtered signal is represented as
\begin{equation}
\label{eqn:graphfilter}
\boldx_{\text{out}} = \mathbf{G} \boldx.
\end{equation}
Graph filtering may be defined in the vertex and frequency domains. Vertex domain filtering is defined as a linear combination of the neighborhood samples
\begin{equation}
\label{eqn:vertex_filtering}
x_{\text{out}}[n] := \sum_{i \in \mathcal{N}_n} [\mathbf{G}]_{ni} \ x[i],
\end{equation}
where $\mathcal{N}_n$ represents neighborhood vertex indices around the $n$th vertex.
In graph frequency domain filtering, the output is defined as a generalized convolution \cite{Shuman2016a}:
\begin{equation}
\label{eqn:GFT_filtering}
x_{\text{out}}[n] := \sum_{i = 0}^{N-1} \hat{x}[i] G(\li) u_i[n]
\end{equation}
where the filter response in the graph frequency domain is given by $G(\li) \in \mathbb{R}$. This filtering is equivalently written as
\begin{equation}
\label{eqn:GFT_filtering_matrix}
\boldx_{\text{out}} = \bU G(\bLam) \bU^* \boldx,
\end{equation}
where $G(\bLam):=\text{diag}(G(\la_0), G(\la_1), \dots)$. Here, $\bG = \bU G(\bLam) \bU^*$. If $G(\li)$ is a $P$th order polynomial, \eqref{eqn:GFT_filtering} coincides with vertex domain filtering \eqref{eqn:vertex_filtering} with a $P$-hop local neighborhood \cite{Shuman2013}.

\section{Generalized Sampling in Hilbert Space}\label{sec:gensamp}
This section introduces prior results on generalized sampling in Hilbert spaces \cite{Eldar2003, Eldar2006, Eldar2015} and corresponding results in the SI setting, which are fundamental for our generalized graph sampling approach. This section briefly describes the generalized sampling framework.
Detailed derivations, including error analysis which can be easily applied to graph sampling, may be found in \cite{Eldar2015}.
Table \ref{tb:SI_filters} summarizes the main results of this section in the SI setting.

\begin{table*}[t]
\caption{Correction and Reconstruction Filters for Shift-Invariant and Graph Spectral Filters where CF and RF are abbreviations of correction filter and reconstruction filter, respectively. DS, LS, and MX refer to direct-sum, least squares, and minimax solutions, respectively. Spectra $R_{XY}(\omega)$ and $\tilde{R}_{XY}(\la_i)$ are defined in \eqref{eqn:sampled_cc} and \eqref{eqn:graph_cc}.}\label{tb:SI_filters}
\centering
\renewcommand{\arraystretch}{1}
\begin{tabular}{c||c|c|cl||c|c|cl}
\hline
& \multicolumn{4}{c||}{Shift-invariant subspace}& \multicolumn{4}{c}{Periodic graph spectrum subspace} \\\cline{2-9}
& \multicolumn{2}{c|}{Unconstrained}&  \multicolumn{2}{c||}{Predefined ($W(\omega)$ is fixed)}&  \multicolumn{2}{c|}{Unconstrained}&  \multicolumn{2}{c}{Predefined ($W(\li)$ is fixed)} \\\hline
Filter& CF& RF& \multicolumn{2}{c||}{CF} & CF& RF& \multicolumn{2}{c}{CF} \\\hline
\multirow{3}{*}{Subspace Prior}& \multirow{2}{*}{$\dfrac{1}{R_{SA}(\omega)}$} & \multirow{2}{*}{$A(\omega)$} & $\dfrac{R_{WA}(\omega)}{R_{SA}(\omega)R_{WW}(\omega)}$ & DS, MX& \multirow{2}{*}{$\dfrac{1}{\tilde{R}_{SA}(\la_i)}$}&\multirow{2}{*}{$A(\la_i)$} & $\dfrac{\tilde{R}_{WA}(\la_i)}{\tilde{R}_{SA}(\la_i)\tilde{R}_{WW}(\la_i)}$ & DS, MX\\
& & & $\dfrac{1}{R_{SW}(\omega)}$ & LS& & &  $\dfrac{1}{\tilde{R}_{SW}(\la_i)}$& LS\\\hline
\multirow{4}{*}{Smoothness Prior}&\multirow{2}{*}{$\dfrac{1}{R_{SW}(\omega)}$} & \multirow{2}{*}{$\dfrac{S(\omega)}{|V(\omega)|^2}$} & $\dfrac{1}{R_{S\widetilde{W}}(\omega)}$& LS& \multirow{2}{*}{$\dfrac{1}{\tilde{R}_{SW}(\la_i)}$}& \multirow{2}{*}{$\dfrac{S(\la_i)}{V^2(\la_i)}$}& $\dfrac{1}{\tilde{R}_{S\widetilde{W}}(\la_i)}$& LS\\
& & &$\dfrac{R_{W\widetilde{W}}(\omega)}{R_{S\widetilde{W}}(\omega)R_{WW}(\omega)}$ & MX& & & $\dfrac{\tilde{R}_{W\widetilde{W}}(\li)}{\tilde{R}_{S\widetilde{W}}(\li)\tilde{R}_{WW}(\li)}$& MX\\\hline
\end{tabular}
\end{table*}%

\subsection{Sampling and Recovery Framework}
Fig. \ref{fig:generalized_sampling_classical}(a) illustrates the generalized sampling framework in Hilbert space. Its SI counterpart is shown in Fig. \ref{fig:generalized_sampling_classical}(b). Let $x$ be a vector in a Hilbert space $\mH$ and $c[n]$ be its $n$th sample given by $c[n] = \langle s_n, x\rangle$, where $\{s_n\}$ is a Riesz basis and $\langle \cdot, \cdot \rangle$ is an inner product. Denoting by $S$ the set transformation corresponding to $\{s_n\}$ we can write the samples as $c = S^* x$, where $\cdot^*$ represents the adjoint. The subspace generated by $\{s_n\}$ is denoted by $\mS$.

In the SI setting, $s_n = s(t-nT)$ for a real function $s(t)$ and a given period $T$. The samples can then be expressed as
\begin{equation}
\label{eqn:cn}
c[n] = \langle s(t-nT), x(t)\rangle = \left. x(t) \ast s(-t)\right|_{t = nT},
\end{equation}
where $\ast$ represents convolution. The continuous-time Fourier transform (CTFT) of the samples $c[n]$, denoted $C(\omega)$, can be written as
\begin{equation}
\label{eqn:Comega}
C(\omega) = R_{SX}(\omega),
\end{equation}
where
\begin{equation}
\label{eqn:sampled_cc}
R_{SX}(\omega):=\frac{1}{T}\sum_{k=-\infty}^{\infty}S^*\left(\frac{\omega - 2\pi k}{T}\right)X\left(\frac{\omega - 2\pi k}{T}\right)
\end{equation}
is the sampled cross correlation. Thus, we may view sampling in the Fourier domain as multiplying the input spectrum by the filter's frequency response and subsequently aliasing the result with uniform intervals that depend on the sampling period. In bandlimited sampling, $s(-t) = \text{sinc}(t/T)$, where $\text{sinc}(t) = \sin(\pi t)/(\pi t)$. However, $s(t)$ can be chosen arbitrarily in the generalized sampling framework.

The recovery of the sampled signal $c$ is represented as
\begin{equation}
\label{eqn:xtilde_hilbert}
\tilde{x} = WHc = WH(S^*x),
\end{equation}
where $W$ is a set transformation corresponding to a basis $\{w_n\}$ for the reconstruction space, which spans a closed subspace $\mW$ of $\mH$. The transform $H$ is called the \emph{correction transformation} and operates on the samples $c$ prior to recovery.

In the SI setting, the recovery corresponding to \eqref{eqn:xtilde_hilbert} is given by
\begin{equation}
\label{eqn:x_tilde}
\tilde{x}(t) = \sum_{n \in \mathbb{Z}} (h[n] \ast c[n]) w(t-nT),
\end{equation}
where a discrete-time correction filter $h[n]$ is first applied to $c[n]$: The output $d[n] = h[n] \ast c[n]$ is interpolated by $w(t-nT)$, to produce the recovery $\tilde{x}(t)$.

Next, we describe known results on generalized sampling with subspace and smoothness priors.

\begin{figure}[t]
\centering
\subfigure[][Sampling in Hilbert space]{\includegraphics[width = \linewidth]{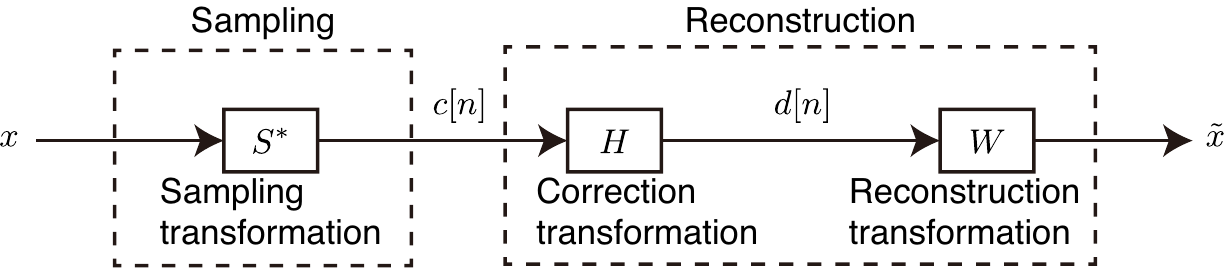}}\\
\subfigure[][Sampling in SI space]{\includegraphics[width = \linewidth]{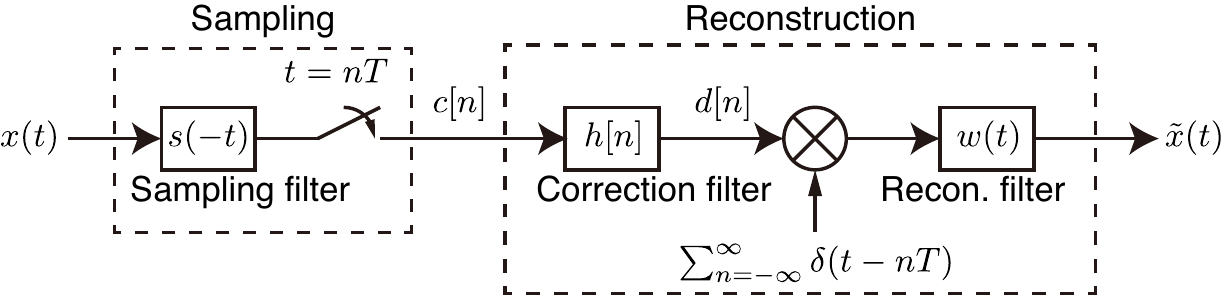}}
\caption{Generalized sampling frameworks for sampling in Hilbert and SI spaces. The same sampling-correction-reconstruction system can be used for both the subspace and smoothness priors. In Hilbert space sampling, the original and reconstructed signals are represented as $x$ and $\tilde{x}$, respectively. The sequence of samples is $\{c[n]\}$ and its corrected counterpart is $\{d[n]\}$. The set transformations of sampling and reconstruction are given by $S: \mH \rightarrow \ell_2$ and $W: \ell_2 \rightarrow \mH$, respectively. The corrected sequence is given by $d = Hc$ for some linear correction transformation $H: \ell_2 \rightarrow \ell_2$. For SI sampling, the original and reconstructed continuous-time signals are represented as $x(t)$ and $\tilde{x}(t)$, respectively. The set transformations reduce to filtering, as indicated in the figure.}
\label{fig:generalized_sampling_classical}
\end{figure}

\subsection{Subspace Prior}\label{subsec:ss_hilbert}
Suppose that $x$ lies in an arbitrary subspace $\mA$ of $\mH$ and assume that $\mA$ is known. Hence, $x$ can be represented as $x = \sum d[n] a_n = Ad$, where $\{a_n\}$ is an orthonormal basis for $\mA$ and $d[n]$ are the expansion coefficients of $x$.
In the SI setting, $x(t)$ is expressed as
\begin{equation}
\label{eqn:xt_si}
x(t) = \sum_{n \in \mathbb{Z}} d[n]a(t-nT),
\end{equation}
for some sequence $d[n]$ where $a(t)$ is a real generator satisfying the Riesz condition.
In the Fourier domain, \eqref{eqn:xt_si} becomes 
\begin{equation}
\label{eq:SI}
X(\omega)= D(e^{j \omega T}) A(\omega),
\end{equation}
where $A(\omega)$ is the CTFT of $a(t)$ and $D(e^{j \omega T})$ 
is the discrete-time Fourier transform (DTFT) of the sequence $d[n]$, and is $2\pi/T$ periodic.

\subsubsection{Unconstrained Case}
We first consider the case in which the recovery is unconstrained, so that $W$ can be any transformation. In this setting, we may recover a signal in $\mathcal{A}$ by choosing $W=A$ in \eqref{eqn:xtilde_hilbert}. If $S^*A$ is invertible, then perfect recovery of any $x \in \mathcal{A}$ is possible by using
$H=(S^*A)^{-1}$. Invertibility can be ensured by the DS condition: $\mA$ and $\mS^\bot$ intersect only at the origin and span $\mH$ jointly. This requirement is formally written as
\begin{equation}
\label{eqn:DS_condition}
\mH = \mA \oplus \mS^\bot.
\end{equation}

Under the DS condition, a unique recovery is obtained by an oblique projection operator onto $\mA$ along $\mS^\bot$ given by
\begin{equation}
\label{eqn:x_ss_ds_h}
\tilde{x} = A (S^*A)^{-1} S^* x = x.
\end{equation}
In the SI setting, the frequency response of the correction filter is
\begin{equation}
\label{eqn:Homega}
H(\omega) =  \frac{1}{R_{SA}(\omega)}.
\end{equation}

If $\mA$ and $\mS^\bot$ intersect, then there is more than one signal in $\mA$ that matches the sampled signal $c$. We may then consider several selection criteria to obtain an appropriate signal out of (infinitely) many candidates. Widely accepted strategies are the LS and MX approaches.

The LS recovery is the minimum energy solution obtained as
\begin{equation}
\label{ }
\tilde{x} = \argmin_{x\in\mA,\ S^*x = c} \|x\|^2,
\end{equation}
and is given by
\begin{equation}
\label{eqn:x_ss_ls_h}
\tilde{x} = A (S^*A)^{\dagger} S^* x.
\end{equation}
Here, $H = (S^*A)^{\dagger}$ and $\cdot^{\dagger}$ represents the Moore-Penrose pseudo inverse.
Its corresponding form in the SI setting is
\begin{equation}
\label{ }
H(\omega) = \begin{cases}
\frac{1}{R_{SA}(\omega)} & R_{SA}(\omega) \neq 0\\
0 & R_{SA}(\omega) = 0.
\end{cases}
\end{equation}

The MX criterion minimizes the worst-case error from the original signal:
\begin{equation}
\label{ }
\tilde{x} = \argmin_{\tilde{x}} \max_{x \in \mA,\ S^*x = c} \|\tilde{x} - x\|^2.
\end{equation}
The solution with a subspace prior is the same as that in \eqref{eqn:x_ss_ls_h}.

\subsubsection{Predefined Case}
When the reconstruction transformation $W$ is predefined, perfect recovery is not possible in general. However, we can still design a correction transformation $H$ such that the solution is close to $x$ in some sense.

With the DS condition in \eqref{eqn:DS_condition}, a minimal-error recovery can be obtained by the correction filter
\begin{equation}
\label{eqn:H_ss_ds_h}
H = (W^*W)^{-1} W^* A (S^*A)^{-1}.
\end{equation}
The recovered signal is $\tilde{x} = W(W^*W)^{-1} W^* A (S^*A)^{-1}S^* x$, which is the orthogonal projection of the unconstrained solution onto $\mW$. In the SI setting,
\begin{equation}
\label{eqn:SI_H_ss_ds}
H(\omega) = \frac{R_{WA}(\omega)}{R_{SA}(\omega)R_{WW}(\omega)}.
\end{equation}

When the DS condition does not hold, the LS and MX strategies can be considered as in the unconstrained case.
The LS solution is $H = (S^*W)^{\dagger}$, which results in the following reconstruction:
\begin{equation}
\label{eqn:x_ss_pd_ls}
\tilde{x} = W(S^*W)^{\dagger} S^* x.
\end{equation}
This solution is the same as that in \eqref{eqn:x_ss_ls_h} by replacing $A$ with $W$.
The MX solution is given by
\begin{equation}
\label{ }
\tilde{x} = W(W^*W)^{-1} W^* A(S^* A)^{\dagger} S^* x,
\end{equation}
with $H = (W^*W)^{-1} W^* A(S^* A)^{\dagger}$. The corresponding SI solution is the same as that in \eqref{eqn:SI_H_ss_ds}
with $H(\omega) =0$ when the denominator is zero.

\subsection{Smoothness Prior}\label{subsec:sm_hilbert}
The smoothness prior is a less restrictive assumption than the subspace prior because the actual signal subspace $\mA$ is not necessarily known. Instead, we assume the signal is smooth, which is formulated as $\|Vx\| \leq \rho$ for some invertible operator $V$: In the SI setting $V = V(\omega)$ is nonzero for all $\omega$. Smoothness is often measured by low energy in high frequency components:
\begin{equation}
\label{eqn:high_freq_energy_SI}
\int_{-\infty}^{\infty} |V(\omega) X(\omega)|^2 d\omega \leq \rho^2.
\end{equation}

In general, with a smoothness prior, there are infinitely many solutions. Two approaches to select a solution are the LS and MX methods, which can be applied in both the unconstrained and constrained settings.

\subsubsection{Unconstrained Case}
Suppose that $V^*V$ is a bounded operator. In the LS method, the objective function is formulated by choosing the smoothest signal among all the possibilities:
\begin{equation}
\label{eqn:x_sm_costfunc}
\tilde{x} = \argmin_{x \in \{x| S^*x = c\}} \|Vx\|^2.
\end{equation}
The solution to \eqref{eqn:x_sm_costfunc} is given by
\begin{equation}
\label{eqn:x_sm_unc}
\tilde{x} = \widetilde{W} (S^* \widetilde{W})^{-1} S^* x
\end{equation}
where $\widetilde{W} = (V^*V)^{-1} S$. In the SI setting, the correction filter in \eqref{eqn:x_sm_unc} reduces to
\begin{equation}
\label{ }
H(\omega)  = \frac{1}{R_{S\widetilde{W}}(\omega)}
\end{equation}
with
\begin{equation}
\label{eqn:W_smoothness_unc}
\widetilde{W}(\omega) = \frac{S(\omega)}{|V^2(\omega)|}.
\end{equation}
The MX solution coincides with \eqref{eqn:x_sm_unc}.

\subsubsection{Predefined Case}
When the recovery space is predefined, the constraint on the feasible set is slightly different from that in \eqref{eqn:x_sm_costfunc}.
The LS objective for the predefined case may be formulated as
\begin{equation}
\label{eqn:sm_costfunc_pd}
\tilde{x} = \argmin_{x \in \{x| x\in \mW,\  S^*x = Pc\}} \|Vx\|^2,
\end{equation}
where $P$ is the orthogonal projection onto the range space of $S^*W$.
The solution can be shown to be given by
\begin{equation}
\label{ }
\tilde{x} = \widehat{W} (S^* \widehat{W})^\dagger S^* x
\end{equation}
where $\widehat{W} = W(W^* V^* V W)^{-1} W^* S$. In the SI setting, \eqref{eqn:sm_costfunc_pd} reduces to the use of $H(\omega) = 1/R_{SW}(\omega)$ prior to reconstruction with $W(\omega)$ \cite[Section 7.2.1]{Eldar2015}. Therefore, constrained recovery under the LS objective is the same in the subspace and smoothness priors and the smoothness constraint is not included in the solution.

The MX criterion with a smoothness prior can be formulated as
\begin{equation}
\label{ }
\tilde{x} = \argmin_{\tilde{x} \in \mW} \max_{x \in \{x| S^*x = c,\ \|Vx\| \leq \rho\}} \|\tilde{x} - W(W^*W)^{-1}Wx\|^2.
\end{equation}
This solution is given by
\begin{equation}
\label{eqn:x_sm_pd_mx}
\tilde{x} = W(W^*W)^{-1}W \widetilde{W} (S^* \widetilde{W})^{-1} S^* x.
\end{equation}
This is the orthogonal projection onto $\mW$ of the unconstrained solution in \eqref{eqn:x_sm_unc}: The correction transformation is $H = (W^*W)^{-1}W \widetilde{W} (S^* \widetilde{W})^{-1}$. In the SI setting, it reduces to
\begin{equation}
\label{ }
H(\omega) = \frac{R_{W\widetilde{W}}(\omega)}{R_{S\widetilde{W}}(\omega)R_{WW}(\omega)}.
\end{equation}

\section{Sampling and Recovery of Graph Signals}\label{sec:framework}
\subsection{Sampling of Graph Signals}\label{subsec:graphsampling}
Two methods of sampling over graphs have been proposed in the literature: 1) sampling in the vertex domain \cite{Anis2016, Chen2015} and 2) sampling in the graph frequency domain \cite{Tanaka2018}.

\subsubsection{Sampling in the Vertex Domain}
For sampling in the vertex domain, samples on a predetermined vertex set $\mT$ are selected. This corresponds to nonuniform subsampling in the time domain. In contrast to the SI setting, vertex domain sampling is conducted nonuniformly because vertex indices do not reflect the structure of the signal. Many approaches have been proposed to select the ``best'' sampling set from a given graph under different criteria \cite{Anis2016, Puy2018, Sakiya2019a, Tsitsv2016}.

Let us define $\bI_\mT \in \{0, 1\}^{K\times N}$ as a submatrix of the identity matrix $\bI_N$, whose rows are determined by the sampling set $\mT$ that identifies the vertices that remain after sampling, i.e., row indices in $\bI_N$. Sampling in the vertex domain is defined as follows:

\begin{definition}[Sampling of graph signals in the vertex domain \cite{Anis2016, Chen2015}]\label{def:GD_vertex}
Let $\boldx \in \mathbb{C}^N$ be the original graph signal and $\bG \in \mathbb{C}^{N \times N}$ be an arbitrary graph filter in \eqref{eqn:graphfilter}. In addition, let $\bI_\mT$ be a submatrix of the identity matrix $\bI_N$ extracting $K = |\mT|$ rows corresponding to the sampling set $\mT$. The sampled graph signal $\bmc \in \mathbb{C}^K$ is given as follows:
\begin{equation}
\label{eqn:vertex_samp}
\bmc = \bI_\mT \mathbf{G}\boldx.
\end{equation}
\end{definition}
\noindent
The sampling matrix is therefore given by $\bS^* = \bI_\mT \mathbf{G}$.

Aggregation sampling \cite{Marque2016, Valses2018} is a variant of vertex sampling that uses a specifically designed $\mathbf{G}$. For example, \cite{Marque2016} defines
\begin{equation}
\label{eqn:G_agg}
\mathbf{G} = \bm{\Psi}\text{diag}(u_0^*(\li), u_1^*(\li), \dots) \bU^*
\end{equation}
where $[\bm{\Psi}]_{k \ell} =  \la_\ell^{k}$, and \cite{Valses2018} utilizes a random matrix to filter the signal, i.e.,
\begin{equation}
\label{eqn:G_agg_rand}
\mathbf{G} = (\bI + \bA) \circ \mathbf{\Xi},
\end{equation}
where $\mathbf{\Xi}$ is a random matrix and $\circ$ represents an element-wise product.
In general, $\mathbf{G}$ in \eqref{eqn:G_agg} and \eqref{eqn:G_agg_rand} cannot be decomposed using $\bU$ and therefore it does not have a diagonal graph frequency response.

The definitions above based on vertex domain operations result in nonuniform sampling in general. When the signal is bandlimited (in a graph frequency sense), perfect recovery is guaranteed if $\mT$ is a \textit{uniqueness set} \cite{Anis2016, Pesens2008}. However, sampling and reconstruction are not symmetric in general: Recovery is not given through filtering and upsampling. This is a significant difference from SI sampling.
Instead, we use frequency domain sampling to build a parallel of generalized SI sampling applied to the graph setting and enable recovery through frequency domain filtering. 

\subsubsection{Sampling in the Graph Frequency Domain}
To define sampling over a graph, we extend sampling in SI subspaces expressed by \eqref{eqn:Comega} to the graph frequency domain \cite{Tanaka2018}.
In particular, the graph Fourier transformed input $\hat{\boldx}$ is first multiplied by a graph frequency filter $S(\bLam)$; the product is subsequently aliased with period $K$. This results in the following definition:

\begin{definition}[Sampling of graph signals within the graph frequency domain]\label{def:GD_spectral}
Let $\hat{\boldx} \in \mathbb{C}^N$ be the original signal in the graph frequency domain, i.e., $\hat{\boldx} = \bU^* \boldx$, and let $S(\bLam)$ be an arbitrary sampling filter in the graph frequency domain. For any sampling ratio $M\in \mathbb{Z}$, the sampled graph signal in the graph frequency domain\footnote{$M$ is assumed to be a divisor of $N$ for simplicity.} is given by $\hat{\bmc} \in \mathbb{C}^{K}$, where $K=N/M$, and 
\begin{equation}
\label{eqn:gft_sampling}
\hat{c}(\la_i) = \sum_{\ell=0}^{M-1} S\left(\la_{i + K\ell}\right)\hat{x}\left(\la_{i + K\ell}\right).
\end{equation}
In matrix form, the sampled graph signal can be represented as $\hat{\bmc} = \bD_{\text{\emph{samp}}} S(\bLam)\hat{\boldx}$
where $\bD_{\text{\emph{samp}}} = \begin{bmatrix} \mathbf{I}_{K} & \mathbf{I}_{K} & \ldots \end{bmatrix}$.
\end{definition}
\noindent
Hereafter, we denote the sampling matrix $\bS^*$ as follows.
\begin{equation}
\label{eqn:S}
\bS^* = \bD_{\text{samp}} S(\bLam) \bU^*.
\end{equation}

This graph frequency domain sampling ``mixes'' different frequency components obtained by the GFT. Different eigenvectors represent different variations on the graph \cite{Shuman2013, Shuman2016a, Perrau2018}. The GFT coefficients of a graph signal provide a notion of a frequency content similar to the DFT; however, the GFT basis varies according to the underlying graph and variation operator used. The (weighted) sum of the two GFT coefficients in \eqref{eqn:gft_sampling} can be seen as a counterpart of sampling in the Fourier domain in classical signal processing.

Suppose that $\bU^*$ is the DFT matrix: For example, the DFT matrix diagonalizes the graph Laplacian $\bL$ of the circular graph \cite{Strang1999}. In this case, the GFT domain sampling in \eqref{eqn:gft_sampling} coincides with that in the DFT domain \cite{Tanaka2018}, i.e., the sampled spectrum of \eqref{eqn:Comega} $C[i] = C(2\pi i/N)$ ($i = 0, \dots, N-1$) yields the same output as in \eqref{eqn:gft_sampling}.

\begin{figure}[t]
   \centering
   \includegraphics[width = .7\linewidth]{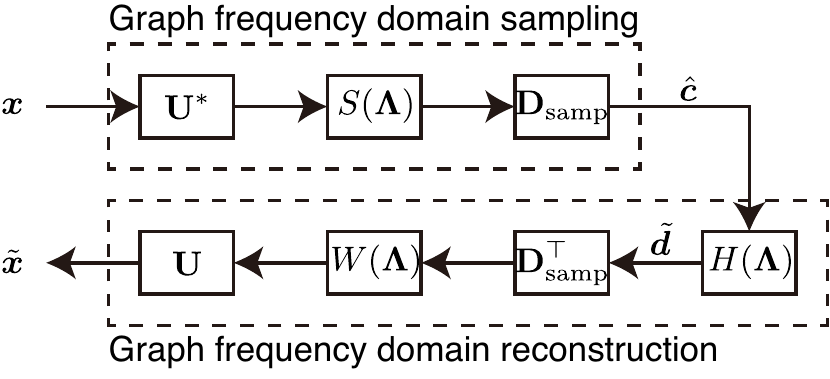}
   \caption{Generalized sampling framework for PGS subspaces. Here, $\boldx$ and $\tilde{\boldx}$ are the original and reconstructed graph signals, respectively, $\hat{\bmc}$ is the sampled signal in the graph frequency domain, and $\tilde{\bmd}$ is the corrected graph signal.}
   \label{fig:generalized_sampling}
\end{figure}

\subsection{Recovery Framework for Generalized Graph Sampling}
Our framework for generalized graph sampling is shown in Fig. \ref{fig:generalized_sampling}. It parallels sampling in Hilbert and SI spaces, as illustrated in Fig.~\ref{fig:generalized_sampling_classical} \cite{Eldar2006, Eldar2009}. In this paper, we assume that sampling, filtering, and reconstruction are all performed in the graph frequency domain. This results in graph filters that can be interpreted as an analog of SI sampling. As in standard sampling theory, three filters are critical in the recovery problem: sampling, correction, and reconstruction filters.

To sample $\boldx$, we transform the input into the GFT domain, resulting in $\hat{\boldx} = \bU^* \boldx$. The output is subsequently filtered using the sampling filter $S(\bLam)$.
The filtered signal is downsampled to yield a sampled signal $\hat{\bmc} =\bS^* \boldx = \bD_{\text{samp}}S(\bLam)\hat{\boldx}$. In the reconstruction step, $\hat{\bmc}$ is filtered by the correction filter $\bH = H(\bLam):= \text{diag}(H(\la_0), \dots, H(\la_{N-1}))$.
Finally, $\tilde{\bmd} = H(\bLam) \hat{\bmc}$ is upsampled to the original dimension by $\bD_{\text{samp}}^\top$, and the reconstruction filter $W(\bLam) := \text{diag}(W(\la_0), \dots, W(\la_{N-1}))$ is applied to the upsampled signal. After performing an inverse GFT, we obtain the recovered signal $\tilde{\boldx}$. This can be written as
$\tilde{\boldx} = \bU W(\bLam) \bD_{\text{samp}}^\top \bH \hat{\bmc}$, where the reconstruction matrix is given by $\bW := \bU W(\bLam) \bD_{\text{samp}}^\top$.

The primary objective in this framework is to consider the design method of the correction and reconstruction filters, $\bH$ and $\bW$, that recover the original signal as accurately as possible with a given prior and constraint. We follow the same strategies as that of generalized sampling in Hilbert spaces introduced in Section \ref{sec:gensamp}: DS, LS, and MX. The solutions with subspace and smoothness priors are presented in Sections \ref{subsec:ss_hilbert} and \ref{subsec:sm_hilbert}, respectively.

\begin{figure}[tp]
   \centering
   \includegraphics[width = \linewidth]{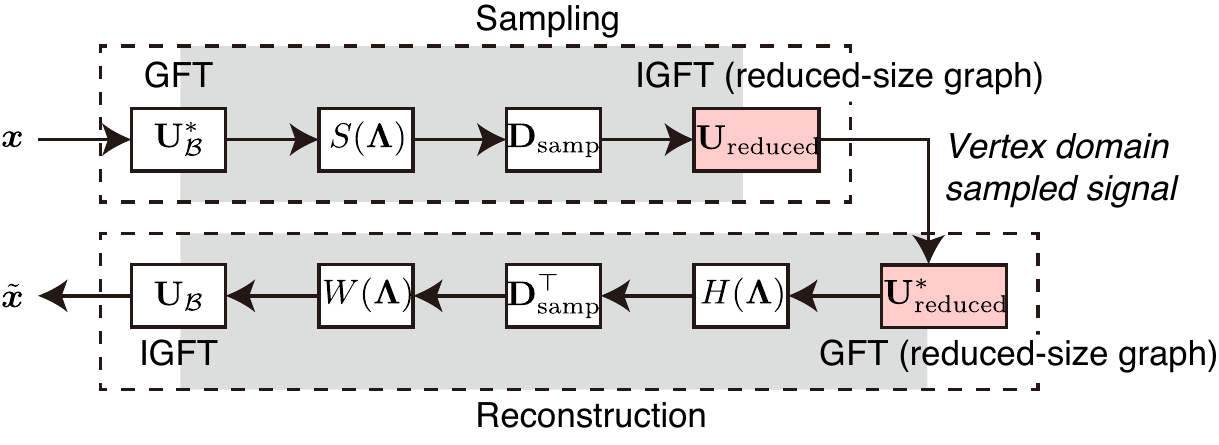} 
   \caption{Generalized sampling framework for graph signals using frequency domain sampling; the sampled signal is transformed back into the vertex domain. The red boxes are building blocks from Fig. \ref{fig:generalized_sampling}. The sampled signal is located on the vertices of $\mathcal{G}_{\text{reduced}}$. The gray regions correspond to the graph frequency domain.}
   \label{fig:gsp_sampling_framework_vertex}
\end{figure}

\begin{figure*}[tp]
   \centering
   \includegraphics[width = \linewidth]{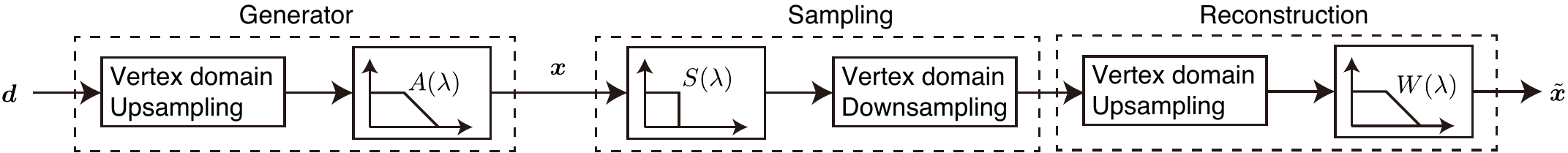} 
   \caption{Generalized graph sampling framework for a bipartite graph. A non-bandlimited graph signal $\boldx$ is generated from the expansion coefficients $\bmd$ by performing the vertex domain upsampling and filtering by $A(\la)$. Subsequently, $\boldx$ is filtered with the sampling filter $S(\la)$, followed by vertex domain subsampling. The signal is reconstructed by applying the vertex domain upsampling followed by $W(\la)$ in \eqref{eqn:W_symnormalized}.}
   \label{fig:bpt_example}
\end{figure*}

\subsection{Sampling and Recovery on Bipartite Graphs}\label{subsec:bpt}
Before describing the filter design methods in the following sections, we introduce an interesting and special case, namely, sampling on bipartite graphs. In this scenario, vertex and spectral domain sampling coincide.

Studies on sampling for bipartite graphs are not only interesting as an interconnection between two sampling paradigms, but also of practical importance. Bipartite graph sampling has been well studied in the context of perfect reconstruction graph filter banks \cite{Narang2012, Narang2013, Sakiya2014a, Sakiya2016a, Tanaka2014a, Tanaka2019a}, where the original signal is decomposed through low- and high-pass channels. Maximally decimated coefficients are obtained by applying sampling for each channel where the two disjoint vertex sets correspond to the transformed coefficients in each channel. To realize perfect recovery for an arbitrary graph, one needs to decompose the original (non-bipartite) graph into several bipartite graphs with disjoint edge subsets\footnote{An arbitrary $\mathcal{X}$-colorable graph can always be decomposed into $\lceil \log_2 \mathcal{X} \rceil$ bipartite subgraphs with disjoint edge subsets \cite{Narang2012, Harary1977}.}. Multiband decompositions are then recursively conducted on these bipartite graphs. Such a filter bank has multiple channels to realize recovery of full-band graph signals. Instead, our graph sampling framework only has one branch, as shown in Fig. \ref{fig:generalized_sampling}, while still allowing for perfect recovery under appropriate conditions.

Suppose that a graph is bipartite having two equal-sized vertex sets. Formally, let $\mathcal{B} = (\mathcal{V}_1, \mathcal{V}_2, \mathcal{E})$ be a bipartite graph that contains two disjoint vertex sets $\mathcal{V}_1$ and $\mathcal{V}_2$, where $|\mathcal{V}_1| = |\mathcal{V}_2| = N/2$, i.e., edges only exist between $\mathcal{V}_1$ and $\mathcal{V}_2$. Without loss of generality, we assume that its first $N/2$ vertices correspond to $\mathcal{V}_1$ and its last vertices correspond to $\mathcal{V}_2$. We also assume that the GFT matrix is the eigenvector matrix of the symmetric normalized graph Laplacian.

Fig. \ref{fig:gsp_sampling_framework_vertex} illustrates the vertex domain representation of our generalized sampling framework of Fig. \ref{fig:generalized_sampling}, where the sampled signal is transformed back into the vertex domain. Suppose that the reduced-size graph $\mathcal{G}_{\text{reduced}}$ of size $N/2$ is obtained by reconnecting edges within $\mathcal{V}_1$ with Kron reduction \cite{Dorfle2013, Shuman2016}. The symmetric normalized graph Laplacian of $\mathcal{G}_{\text{reduced}}$ can be obtained from $\underline{\bL}$ of $\mB$ through the following:
\begin{equation}
\label{eqn:Kronred}
\underline{\bL}_{\text{reduced}} =\underline{\bL}_{\mV_1 \mV_1}-\underline{\bL}_{\mV_1 \mV_2}\underline{\bL}_{\mV_2\mV_2}^{-1}\underline{\bL}_{\mV_2 \mV_1},
\end{equation}
where $\underline{\bL}_{\mathcal{X} \mathcal{Y}}$ is a submatrix of $\underline{\bL}$, whose extracted rows and columns from $\underline{\bL}$ are specified by $\mathcal{X}$ and $\mathcal{Y}$, respectively.

Considering the simplest direct sampling, i.e., there is no sampling filter. The relationship between sampling in the vertex domain \eqref{eqn:vertex_samp} and the vertex domain representation of sampling in the graph frequency domain \eqref{eqn:gft_sampling} is given in the following theorem, taken from \cite{Sakiya2019}:
\begin{theorem}\label{th:vertex_freq}
Suppose that the GFT matrix $\bU_{\mB}$ is the eigenvector matrix of the symmetric normalized graph Laplacian of $\mathcal{B}$, and $\bU_{\emph{reduced}}$ is the eigenvector matrix of $\underline{\bL}_{\emph{reduced}}$ in \eqref{eqn:Kronred}. The following relationship then holds:
\begin{equation}
\label{eqn:vertex_gft_sampling}
\bU_{\emph{reduced}} \bD_{\emph{samp}}
\bU_{\mB}^* =  \begin{bmatrix}
\bI & \mathbf{0}
\end{bmatrix} = \bI_{\mathcal{V}_1}.
\end{equation}
\end{theorem}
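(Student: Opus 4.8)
The plan is to exploit the block structure that the bipartite partition imposes on the symmetric normalized Laplacian. With the vertices of $\mV_1$ listed first and assuming no isolated vertices (needed both for $\underline{\bL}$ to be defined in the usual way and for the $\mV_2\mV_2$ block in the Kron reduction to be invertible), $\underline{\bL}$ has identity diagonal blocks and off-diagonal blocks $-\underline{\bA}_{12}$ and $-\underline{\bA}_{12}^*$, where $\underline{\bA}_{12}=\bD_1^{-1/2}\bA_{12}\bD_2^{-1/2}$ is the normalized biadjacency block ($\tfrac{N}{2}\times\tfrac{N}{2}$, using $|\mV_1|=|\mV_2|$). First I would take an SVD $\underline{\bA}_{12}=\bm{\Phi}\bm{\Sigma}\bm{\Psi}^*$ and verify directly that, for each singular triple $(\sigma_j,\bm{\phi}_j,\bm{\psi}_j)$, the unit vectors obtained by stacking $\bm{\phi}_j$ over $\pm\bm{\psi}_j$ and scaling by $1/\sqrt{2}$ are eigenvectors of $\underline{\bL}$ with eigenvalues $1\mp\sigma_j$. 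This is the graph ``spectral folding'' property: the spectrum of $\underline{\bL}$ splits into a low band $\{1-\sigma_j\}\subset[0,1]$ and a high band $\{1+\sigma_j\}\subset[1,2]$, and a high-band eigenvector is obtained from its low-band partner by negating the $\mV_2$ block. Ordering the eigenvalues so that $\la_{i+K}$ is the fold $2-\la_i$ of $\la_i$, this reads
\begin{equation}
\bU_{\mB}=\frac{1}{\sqrt{2}}\begin{bmatrix}\bm{\Phi}&\bm{\Phi}\\\bm{\Psi}&-\bm{\Psi}\end{bmatrix}.
\end{equation}

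The second ingredient is the eigenbasis of the Kron-reduced Laplacian. Because $\underline{\bL}_{\mV_2\mV_2}=\bI$, the reduction \eqref{eqn:Kronred} collapses to $\underline{\bL}_{\text{reduced}}=\bI-\underline{\bA}_{12}\underline{\bA}_{12}^*=\bm{\Phi}(\bI-\bm{\Sigma}^2)\bm{\Phi}^*$, so $\bm{\Phi}$ is an eigenvector matrix of $\underline{\bL}_{\text{reduced}}$ (the reduced-graph frequencies being $1-\sigma_j^2=\la_j(2-\la_j)$), and with the ordering above these eigenvalues come out increasing, consistently with $\bU_{\mB}$. I would therefore fix $\bU_{\text{reduced}}=\bm{\Phi}$ with exactly the same choice of $\bm{\Phi}$ used for $\bU_{\mB}$.

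With both factorizations in place the statement is a short computation. Using $\bD_{\text{samp}}=[\bI_K\ \bI_K]$ ($M=2$, $K=N/2$), the product $\bD_{\text{samp}}\bU_{\mB}^*$ adds the top and bottom halves of $\bU_{\mB}^*$: the two copies of $\bm{\Phi}^*$ add while the $\mV_2$-blocks $\bm{\Psi}^*$ and $-\bm{\Psi}^*$ cancel, so $\bD_{\text{samp}}\bU_{\mB}^*$ is a scalar multiple of $[\bm{\Phi}^*\ \mathbf{0}]$. Left-multiplying by $\bU_{\text{reduced}}=\bm{\Phi}$ then gives $[\bm{\Phi}\bm{\Phi}^*\ \mathbf{0}]=[\bI\ \mathbf{0}]=\bI_{\mV_1}$, up to the scalar fixed by the normalization of the sampling operator $\bD_{\text{samp}}$ (the graph counterpart of the $1/T$ factor appearing in \eqref{eqn:sampled_cc}).

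The linear algebra here is routine, so I expect the real work --- and the main obstacle --- to be in the bookkeeping of conventions. (i) The eigenvalues of $\underline{\bL}$ must be indexed in the folding-compatible order, so that $i$ and $i+K$ are spectral partners; under a naive increasing ordering the partner of $\la_i$ is $\la_{N-1-i}$ and the second block of $\bD_{\text{samp}}$ would have to be reversed, so this choice must be made explicit. (ii) Eigen- and singular vectors are determined only up to sign (and up to rotation within a repeated singular value), so the decompositions of $\underline{\bL}$ and of $\underline{\bL}_{\text{reduced}}$ must be taken with a single consistent $\bm{\Phi}$; the identity should be understood as holding for such compatible choices of $\bU_{\mB}$ and $\bU_{\text{reduced}}$. (iii) The one genuinely non-algebraic point is a zero singular value of $\underline{\bA}_{12}$, i.e.\ eigenvalue $1$ of $\underline{\bL}$: there the folding pair degenerates into a $\mV_1$-supported and a $\mV_2$-supported eigenvector, and one must check that the $\mV_1$-supported ones can be placed in the first $K$ columns and restrict exactly to eigenvectors of $\underline{\bL}_{\text{reduced}}$ --- which follows from the block form but should be spelled out.
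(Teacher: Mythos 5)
Your argument is correct and is exactly the standard proof of this result: the paper itself gives no proof (it imports the theorem from \cite{Sakiya2019}), and the spectral-folding decomposition $\bU_{\mB}=\tfrac{1}{\sqrt{2}}\bigl[\begin{smallmatrix}\bm{\Phi}&\bm{\Phi}\\\bm{\Psi}&-\bm{\Psi}\end{smallmatrix}\bigr]$ from the SVD of the normalized biadjacency block, combined with $\underline{\bL}_{\text{reduced}}=\bI-\underline{\bA}_{12}\underline{\bA}_{12}^*=\bm{\Phi}(\bI-\bm{\Sigma}^2)\bm{\Phi}^*$, is precisely the mechanism used there. The two caveats you raise are genuine rather than pedantic: with unit-norm eigenvectors and $\bD_{\text{samp}}=[\bI_K\ \bI_K]$ exactly as defined here, the product is $\sqrt{2}\,\bI_{\mathcal{V}_1}$ (check $N=2$: $\bU_{\mB}=\tfrac{1}{\sqrt2}\bigl[\begin{smallmatrix}1&1\\1&-1\end{smallmatrix}\bigr]$ gives $[\sqrt2\ \ 0]$), and the identity requires the columns of $\bU_{\mB}$ to be ordered so that indices $i$ and $i+K$ are folding partners, which a globally increasing eigenvalue sort does not produce (it pairs $i$ with $N-1-i$). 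Both discrepancies trace to normalization and ordering conventions inherited from the cited source, so your proof, stated for compatible choices, is the right one.
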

\noindent
This relationship indicates that vertex domain sampling (RHS in \eqref{eqn:vertex_gft_sampling}) coincides with graph frequency domain sampling (LHS in \eqref{eqn:vertex_gft_sampling}) under the special case in which the graph is bipartite and the graph filter is the identity operator, i.e., $\bG = \bI$.

Theorem \ref{th:vertex_freq} can be generalized to allow sampling with a sampling filter $\bG$ not necessarily being the identity:

\begin{corollary}\label{col:sampling_ver_gft}
Suppose the same setup as in Theorem \ref{th:vertex_freq} where a sampling filter $\mathbf{G}$ is conducted before subsampling, as in \eqref{eqn:vertex_samp}. If the sampling filter $\mathbf{G}$ is diagonalizable by $\bU_{\mB}$, then $\bU_{\emph{reduced}}\bS^*$, i.e., the vertex domain representation of graph frequency domain sampling, is identical to $\bI_{\mathcal{V}_1}\mathbf{G}$.
\end{corollary}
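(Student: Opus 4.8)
The plan is to derive Corollary~\ref{col:sampling_ver_gft} directly from Theorem~\ref{th:vertex_freq} by inserting the spectral form of the filter $\bG$ and then commuting the downsampling/reduction map past it. Recall from \eqref{eqn:S} that graph frequency domain sampling with filter $\bG$ has sampling matrix $\bS^* = \bD_{\text{samp}} S(\bLam) \bU_{\mB}^*$; here the hypothesis that $\bG$ is diagonalizable by $\bU_{\mB}$ means $\bG = \bU_{\mB} G(\bLam) \bU_{\mB}^*$ for some diagonal $G(\bLam)$, i.e. $S(\bLam) = G(\bLam)$ in that notation. So the object to analyze is
\begin{equation}
\label{eqn:col_target}
\bU_{\text{reduced}} \bS^* = \bU_{\text{reduced}} \bD_{\text{samp}} G(\bLam) \bU_{\mB}^*.
\end{equation}

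\textbf{Key steps.} First, I would rewrite the right-hand side of \eqref{eqn:col_target} so that the spectral filter sits between two copies of the unitary $\bU_{\mB}$: namely $\bU_{\text{reduced}} \bD_{\text{samp}} G(\bLam) \bU_{\mB}^* = \bigl(\bU_{\text{reduced}} \bD_{\text{samp}} \bU_{\mB}^*\bigr)\bigl(\bU_{\mB} G(\bLam) \bU_{\mB}^*\bigr)$, using $\bU_{\mB}^* \bU_{\mB} = \bI_N$. The first parenthesized factor is exactly the left-hand side of \eqref{eqn:vertex_gft_sampling} in Theorem~\ref{th:vertex_freq}, hence equals $\bI_{\mathcal{V}_1} = [\bI \ \mathbf{0}]$. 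The second parenthesized factor is, by the hypothesis, precisely the vertex-domain filter $\bG$. Therefore $\bU_{\text{reduced}} \bS^* = \bI_{\mathcal{V}_1} \bG$, which is the claimed identity. The only remaining point is to check that the rewriting is legitimate, i.e. that the dimensions of $\bD_{\text{samp}}$ (which is $K \times N = (N/2) \times N$ in the bipartite equal-bisection case, since $M=2$) match so that $\bD_{\text{samp}} G(\bLam)$ and the subsequent factorization are well defined; this is immediate because $G(\bLam)$ is $N\times N$ and $\bU_{\mB}$ is $N\times N$.

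\textbf{Main obstacle.} There is essentially no analytic obstacle: the corollary is a one-line consequence of Theorem~\ref{th:vertex_freq} together with the identity $\bU_{\mB}^*\bU_{\mB} = \bI$. The only thing one must be slightly careful about is the \emph{ordering} of the factors---the filter $\bG$ must be applied in the vertex domain \emph{before} subsampling, matching \eqref{eqn:vertex_samp}, and correspondingly its spectral representation must be inserted on the input side (adjacent to $\bU_{\mB}^*$) rather than on the output side. Because $\bG$ and $\bD_{\text{samp}}$ do not commute in general, writing the product in the wrong order would break the argument; getting the bracketing right, as in the displayed factorization above, is the one bookkeeping step that needs attention. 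Once that is settled, invoking Theorem~\ref{th:vertex_freq} closes the proof.
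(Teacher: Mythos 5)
Your proof is correct and is essentially identical to the paper's: both insert $\bU_{\mB}^*\bU_{\mB}=\bI$ between $\bD_{\text{samp}}$ and the diagonal spectral response so that Theorem~\ref{th:vertex_freq} applies to the factor $\bU_{\text{reduced}}\bD_{\text{samp}}\bU_{\mB}^*$ and the remaining factor is $\bG$. The only cosmetic difference is that you write the spectral response as $G(\bLam)$ where the paper identifies it with $S(\bLam)$.
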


\begin{proof}
By assumption, $\mathbf{G}$ can be represented as follows:
\begin{equation}
\label{eqn:G_USU}
\mathbf{G} = \bU_{\mB} S(\bLam) \bU_{\mB}^*.
\end{equation}
Therefore, the sampling matrix $\bS^*$ in \eqref{eqn:S} is equal to
\begin{equation}
\begin{split}
\bS^* & = \bD_{\text{samp}} S(\bLam) \bU_{\mB}^*\\
& = \bD_{\text{samp}} \bU_{\mB}^*\bU_{\mB} S(\bLam)\bU_{\mB}^*\\
&= \bD_{\text{samp}} \bU_{\mB}^* \mathbf{G}.
\end{split}
\end{equation}
Using \eqref{eqn:vertex_gft_sampling},
\begin{equation}
\label{eqn:UredS}
\bU_{\text{reduced}}\bS^*= \bU_{\text{reduced}} \bD_{\text{samp}} \bU_{\mB}^* \mathbf{G} = \bI_{\mathcal{V}_1}\mathbf{G},
\end{equation}
completing the proof.
\end{proof}

Corollary \ref{col:sampling_ver_gft} addresses the sampling phase. Similarly, during reconstruction, the correction-then-reconstruction steps in the graph frequency domain, $\bW \bH \bU^*_{\text{reduced}}$, can be jointly represented using vertex domain upsampling:
\begin{equation}
\label{eqn:W_symnormalized}
\begin{split}
\bW\bH \bU_{\text{reduced}} &= \bU_{\mB} W(\bLam) \bD_{\text{samp}}^\top H(\bLam) \bU^*_{\text{reduced}}\\
& = \bU_{\mB} W(\bLam) \text{diag}(H(\bLam), H(\bLam)) \bD_{\text{samp}}^\top\bU^*_{\text{reduced}}\\
& = \bU_{\mB} W'(\bLam)  \bU^*_{\mB} \bU_{\mB}\bD_{\text{samp}}^\top\bU^*_{\text{reduced}}\\
&= \bU_{\mB} W'(\bLam) \bU^*_{\mB}
\begin{bmatrix}
\bI & \mathbf{0}
\end{bmatrix}^\top\\
& = \bW' \bI_{\mathcal{V}_1}^\top
\end{split}
\end{equation}
where $W'(\li) := W(\li)H(\la_{i\text{ mod }N/2})$.

Interestingly, from \eqref{eqn:UredS} and \eqref{eqn:W_symnormalized}, the sampling-then-reconstruction step is represented as follows:
\begin{equation}
\label{eqn:recon_bpt}
\tilde{\boldx} = \bW' \bI_{\mathcal{V}_1}^\top \bI_{\mathcal{V}_1}\mathbf{G} \boldx.
\end{equation}
This is the same as vertex domain sampling and reconstruction because $\bW'$ and $\bG$ are both graph filters with diagonal graph frequency responses, as in \eqref{eqn:G_USU} and \eqref{eqn:W_symnormalized}, which is illustrated in Fig. \ref{fig:bpt_example}. The sampling-then-reconstruction \eqref{eqn:recon_bpt} is regarded as one branch in a two-channel graph filter bank for a bipartite graph \cite{Narang2012, Narang2013, Sakiya2016a}. The filter bank approach requires at least two such branches to guarantee perfect recovery. In contrast, our generalized graph sampling realizes perfect reconstruction with one branch as long as the DS condition holds (presented in the next section). This may lead to an implementation advantage in practical applications.

\section{Graph Signal Recovery with Subspace Prior}\label{sec:subspace}
In this section, we assume that the signal lies in a known PGS subspace that depends on the given graph. Subsequently, we present two possible correction filters. One is an unconstrained solution that guarantees perfect recovery of the graph signal with an arbitrary choice of the sampling filter. The other is a predefined solution in which a given filter must be used for reconstruction.

\subsection{PGS Subspace}\label{subsec:pgs_space}
We first consider a graph signal subspace that parallels the generation process in SI subspaces shown in \eqref{eqn:xt_si} and \eqref{eq:SI}. As discussed in the previous section, vertex domain sampling is in general a nonuniform sampling operator in contrast to the uniform SI sampling of \eqref{eqn:xt_si}. Hence, we utilize graph frequency domain sampling that mimics ``sampling by modulation'' in \eqref{eqn:Comega}.

In \eqref{eq:SI}, the $2\pi/T$-periodic spectrum of the expansion coefficients $D(e^{j\omega T})$ is multiplied by the (non-periodic) generator $A(\omega)$ to obtain the signal spectrum $X(\omega)$.
We reflect this characteristic in the signal subspace for the graph setting.

The spectrum of the graph considered herein is finite and discrete. Suppose that we have a length $K$ spectrum $\hat{d}(\li)$ ($i = 0, \dots, K-1$, $K\leq N$) as the expansion coefficients. Although its original length is finite, we assume that $\hat{\bmd}$ is periodic beyond $i \ge K$, as in \eqref{eq:SI}, i.e.,
\begin{equation}
\label{eqn:dmodK}
\hat{d}(\li) = \hat{d}(\la_{i\text{ mod }K}).
\end{equation}
Under this assumption, we can naturally define the signal subspace for graph signals as a counterpart of the SI subspace, as follows:
\begin{definition}\label{def:pgs_space}
A PGS subspace of a given graph $\mathcal{G}$ is a space of graph signals that can be expressed as a GFT spectrum filtered by a given generator:
\begin{equation}
\mathcal{X}_{\emph{PGS}} = \left\{x[n] \left| x[n] = \sum_{i=0}^{N-1} \hat{d}(\la_{i \text{\emph{ mod }} K})
A(\li) u_{i}[n]\right.\right\},\label{eqn:pgs_subspace}
\end{equation}
where $A(\li)$ is the graph frequency domain response of the generator and $\hat{d}(\la_i)$ is an expansion coefficient.
\end{definition}
\noindent
This signal can be represented in the following matrix form:
\begin{equation}
\label{eqn:f_subspace}
\boldx :=  \bA \hat{\bmd} = \bU A(\bm{\Lambda})\bD_{\text{samp}}^\top\hat{\bmd}
\end{equation}
where $\hat{\bmd} := [\hat{d}(\la_0), \dots, \hat{d}(\la_{K-1})]^\top$.

Bandlimited graph signals are a special case of signals in a PGS subspace. Suppose that $A(\li)$ is a bandlimiting low-pass filter $G_{\text{BL}, K}(\li)$, i.e.,
\begin{equation}
G_{\text{BL}, K}(\li) = \begin{cases}
1      & i \in [0, K-1], \\
0      & \text{otherwise}.
\end{cases}\label{eqn:GK}
\end{equation}
The graph signal $\boldx$ generated by \eqref{eqn:f_subspace} completely maintains $\hat{\bmd}$, i.e., is $K$-bandlimited under the GFT basis $\bU$. However, the graph signal generated by \eqref{eqn:pgs_subspace} with an arbitrary $A(\li)$ is in general not necessarily bandlimited; thus, our generalized sampling introduced in the following sections allows for reconstruction of non-bandlimited graph signals.

A similar assumption as \eqref{eqn:f_subspace} without periodicity is given in \cite{Dong2016}, where $\hat{\bmd}$ is assumed to be a random variable. Our PGS assumption includes this definition: In particular, if $K=N$ and $\hat{\bmd}$ is a random variable following a zero-mean multivariate Gaussian distribution, then the subspace defined by Definition \ref{def:pgs_space} coincides with that used in \cite{Dong2016}. However, note that we impose no constraint on the generator function or the expansion coefficients. Thus, Definition \ref{def:pgs_space} treats a broader class of graph signals than the above.

Suppose that $T$ in \eqref{eq:SI} is a positive integer, i.e., the spectra $D(e^{j\omega T})$ are repeated $T$ times within $\omega \in [0, 2\pi]$, and $A(\omega)$ in \eqref{eq:SI} has support $\omega \in [0, 2\pi]$. In this case, a sequence $X[i] = \left.D(e^{j\omega T}) A(\omega)\right|_{\omega = 2\pi i/N}$ ($i = 0, \dots, N-1$) corresponds to the DFT spectrum of length $N$. Therefore, this $X[i]$ can be regarded as a graph signal spectrum in a PGS subspace when $\bU^*$ is the DFT matrix, e.g., the graph $\mathcal{G}$ is a circular graph.

\subsection{Unconstrained Case}\label{subsec:subspace_unc}
Our solutions for generalized graph sampling can be defined following the general Hilbert space results of Section \ref{sec:gensamp}. Owing to the definition of the PGS subspace and sampling in the graph frequency domain, the sampling, correction, and reconstruction filters can all be implemented in the graph frequency domain.

\subsubsection{Recovery Filters}
For the unconstrained solution, we use a reconstruction filter $W(\li) = A(\li)$ in \eqref{eqn:pgs_subspace}. Suppose that the DS condition \eqref{eqn:DS_condition} is satisfied for the signal and sampling subspaces. Following the expression in \eqref{eqn:x_ss_ds_h}, the signal recovery is given as follows:
\begin{equation}
\label{ }
\begin{split}
\tilde{\boldx} &= \bA (\bS^* \bA)^{-1} \bS^* \boldx\\
& = \bA (\bS^* \bA)^{-1} \bS^* \bA \hat{\bmd}\\
&=  \bA \hat{\bmd} = \boldx,
\end{split}
\end{equation}
where the correction filter is
\begin{equation}
\label{eqn:H_subspace_unc}
\bH = (\bS^* \bA)^{-1}.
\end{equation}
Its graph frequency response is
\begin{equation}
\label{eqn:bH}
H(\lambda_i) = \frac{1}{\tilde{R}_{SA}(\la_i)}
\end{equation}
where
\begin{equation}
\label{eqn:graph_cc}
\tilde{R}_{SA}(\la_i) := \sum_{\ell} S(\la_{i + K\ell})A(\la_{i + K\ell}).
\end{equation}
The inverse of $\tilde{R}_{SA}(\la_i)$ is well defined under the DS condition. Note the similarity with \eqref{eqn:Homega}.

The solution for the LS and MX strategies when $\mA$ and $\mS$ intersect can be derived from \eqref{eqn:x_ss_ls_h} as follows:
\begin{equation}
\label{eqn:fhat_subspace_unc}
\boldxhat = \bA (\bS^* \bA)^{\dagger} \bS^* \boldx.
\end{equation}
The correction filter in this case is $\bH = (\bS^* \bA)^{\dagger}$, and it has the same graph frequency response as \eqref{eqn:bH} but with $H(\lambda_i) = 0$ for $\li$ with $\tilde{R}_{SA}(\la_i) = 0$.

\subsubsection{Special Cases}\label{subsubsec:ss_unc_bpt}
Suppose that both the generator and sampling filters are $A(\bLam) = S(\bLam) = G_{\text{BL}, K}(\bLam)$ in \eqref{eqn:GK}.
Subsequently, $H(\li) = 1$ and no correction filter is required. This is equivalent to the perfect recovery condition for bandlimited graph signals using graph frequency domain sampling \cite{Tanaka2018}.

Another interesting case is the bipartite graph introduced in Section \ref{subsec:bpt}. For example, suppose that $S(\li)$ in \eqref{eqn:G_USU} is $G_{\text{BL}, N/2}(\li)$ and the generator is $A(\li) = G_{\text{IR}}(\li)$ with
\begin{equation}
\label{eqn:G_IR}
G_{\text{IR}}(\li) = \begin{cases}
1      & \la_0 \leq \li \leq 2/\la_{\max}, \\
-\frac{2\li}{\la_{\max}}    & \li > 2/\la_{\max},
\end{cases}
\end{equation}
where the correction filter again becomes $H(\li) = 1$; therefore, $W(\li) = A(\li) =G_{\text{IR}}(\li)$. This implies that a non-bandlimited graph signal can be perfectly reconstructed from bandlimited measurements by applying the same filtering as in the generation process without an explicit correction filter.

In addition, as mentioned in \eqref{eqn:recon_bpt}, our sampling and recovery can be represented by using vertex domain sampling for bipartite graphs. Existing graph filter banks for bipartite graphs, e.g., \cite{Narang2012, Narang2013, Sakiya2016a}, may require length-$N$ coefficients (for maximally decimated transforms) for reconstruction on the synthesis side. Instead, our framework only needs one channel which requires a length-$K$ spectrum for recovery, as demonstrated in Section \ref{subsec:exp_bpt}. If the generator function $A(\li)$ of a given signal is losslessly encoded and sent along with the spectrum, we can reconstruct the original signal. This may be regarded as a one-branch compression of a graph signal.

\subsection{Predefined Case}
Suppose that the reconstruction filter $W(\li)$ is predefined. The reconstructed signal $\tilde{\boldx}$ will in general be different from $\boldx$ in this case. As in the unconstrained setting introduced in the previous subsection, the correction transforms in our framework are given through graph spectral filters.

If $\mA$ and $\mS$ satisfy the DS condition in \eqref{eqn:DS_condition}, the solution in \eqref{eqn:H_ss_ds_h} reduces to the following:
\begin{equation}
\label{ }
\bH = (\bW^* \bW)^{-1} \bW^* \bA (\bS^* \bA)^{-1}.
\end{equation}
The corresponding graph filter is
\begin{equation}
\label{eqn:H_ss_const}
H(\la_i) = \frac{\tilde{R}_{WA}(\la_i)}{\tilde{R}_{SA}(\la_i)\tilde{R}_{WW}(\la_i)}.
\end{equation}
If $W(\li) = A(\li)$, the response above is identical to that of the unconstrained case shown in \eqref{eqn:bH}.

Without the DS condition, we can apply the LS and MX strategies. The LS solution is
\begin{equation}
\label{eqn:ss_predef_cls}
\tilde{\boldx} = \bW (\bS^* \bW)^{\dagger} \bS^* \boldx,
\end{equation}
where the correction filter $\bH = (\bS^* \bW)^{\dagger}$
has spectral response
\begin{equation}
\label{eqn:bH_subspace_cls}
H(\lambda_i) = \begin{cases}
\frac{1}{\tilde{R}_{SW}(\la_i)}  & \tilde{R}_{SW}(\la_i) \neq 0,\\
0      & \text{otherwise}.
\end{cases}
\end{equation}

The MX solution becomes
\begin{equation}
\label{ }
\tilde{\boldx} = \bW (\bW^* \bW)^{-1} \bW^* \bA (\bS^* \bA)^{\dagger} \bS^* \boldx,
\end{equation}
with
\begin{equation}
\label{ }
\bH = (\bW^* \bW)^{-1} \bW^* \bA (\bS^* \bA)^{\dagger}.
\end{equation}
The spectral response of the filter now is the same as that in \eqref{eqn:H_ss_const} but $H(\li) = 0$ if the denominator is zero.

The graph correction filters are summarized in Table \ref{tb:SI_filters}. The table demonstrates nicely the similarities with SI sampling.

\section{Graph Signal Recovery with Smoothness Prior}\label{sec:smoothness}
The subspace prior introduced in the previous section enables the input graph signal to be recovered perfectly; however, it requires full knowledge of the given graph and generator. In this section, we consider a less restrictive assumption. We still assume that the GFT basis $\bU$ is given; however, the generator function $A(\li)$ is unknown.

We assume that the graph signal is smooth on the given graph where smoothness is measured by the signal energy in the high graph-frequency components as in the SI setting \eqref{eqn:high_freq_energy_SI}. Although several possible operators exist for measuring signal smoothness on a graph \cite{Shuman2013}, we consider a simple quadratic form of $\boldx$:
\begin{equation}
\|\bV \boldx\|^2_2 = \boldx^* \bV^{2} \boldx = \sum_{i = 0}^{N-1} V^2(\li)|\hat{x}(\li)|^2
\label{eqn:smoothness_quadratic}
\end{equation}
where $\bV := \bU V(\bLam) \bU^*$ is an arbitrary graph filter with spectral response $V(\li)$.
The smoothness condition is given by $\|\bV \boldx\|^2_2 \leq \rho^2$ for some constant $\rho$.
This can be seen as a generalization of the bandlimitedness of graph signals, which has been widely studied \cite{Anis2016, Chen2015, Puy2018, Tsitsv2016, Pesens2008, Pesens2010}, because a bandlimited graph signal corresponds to $\rho=0$ for a high-pass filter $\bV = \bU (\bI - G_{K, \text{BL}}(\bLam)) \bU^*$. In addition, if we assume $\bV = \bL^{1/2}$, then $\|\bV \boldx\|^2_2 = \boldx^* \bL \boldx$, which is a Laplacian quadratic form also used extensively in the literature. Hereinafter, for simplicity, we assume $V(\li) \neq 0$ for all $i$.

For the unconstrained case, the LS recovery is given from \eqref{eqn:x_sm_unc} as follows:
\begin{equation}
\label{eqn:smooth_unc}
\tilde{\boldx}  = \widetilde{\bW} (\bS^* \widetilde{\bW})^{-1}\bS^* \boldx,
\end{equation}
where $\widetilde{\bW} = (\bV^* \bV)^{-1} \bS = \bU V^2(\bLam) \bU^*$ and $\bS^* \bV^{-2} \bS = \bD_{\text{samp}} S^2(\bLam) V^{-2}(\bLam) \bD_{\text{samp}}^\top$ is invertible if $\tilde{R}_{SS}(\li) \neq 0$ for all $i$. This results in $\bH = (\bS^* \widetilde{\bW})^{-1}$, where the spectral response is
\begin{equation}
\label{eqn:smooth_unc_ls}
H(\lambda_i) = \frac{1}{\tilde{R}_{S\widetilde{W}}(\la_i)}.
\end{equation}
The MX solution coincides with \eqref{eqn:smooth_unc_ls} as in the SI solution.

We next consider the predefined case. For the LS criterion, the solution in Hilbert space \eqref{eqn:sm_costfunc_pd} reduces to the constrained LS solution with a subspace prior \eqref{eqn:bH_subspace_cls}: This does not depend on $V(\li)$, i.e., the smoothness prior does not affect the solution.

The MX solution can be obtained from \eqref{eqn:x_sm_pd_mx}:
\begin{equation}
\label{eqn:sm_pd_recon}
\tilde{\boldx} = \bW (\bW^* \bW)^{-1} \bW^* \widetilde{\bW} (\bS^* \widetilde{\bW})^{-1} \bS^* \boldx.
\end{equation}
This leads to
\begin{equation}
\label{ }
\bH = (\bW^* \bW)^{-1} \bW^* \widetilde{\bW} (\bS^* \widetilde{\bW})^{-1},
\end{equation}
wtih spectral response
\begin{equation}
\label{eqn:H_smooth_mx}
H(\la_i) = \frac{\tilde{R}_{W\widetilde{W}}(\la_i)}{\tilde{R}_{S\widetilde{W}}(\la_i)\tilde{R}_{WW}(\la_i)}.
\end{equation}
The smoothness prior $V(\li)$ is incorporated appropriately in the correction filter, in contrast to the LS solution.

These correction filters are summarized in Table \ref{tb:SI_filters}.

\section{Comparison with Existing Studies}\label{sec:relationship}
\subsection{Computational Complexity}\label{subsec:complexity}
Here, we compare the computational complexities of vertex and graph frequency domain sampling required for sampling, recovery, and preprocessing.

\subsubsection{Sampling}
For spectral domain sampling, its complexity is $\mathcal{O}(NK)$ because the sampling matrix $\bS^*$ has a size of $K \times N$. However, its complexity is reduced to $\mathcal{O}(N)$ if we already have GFT coefficients $\hat{\boldx}$ because the sampling filter response $S(\bLam)$ is diagonal. Typically, the GFT is necessary to perform only once even when the sampling ratio or filter is changed.

For vertex domain sampling, the subsampling itself in \eqref{eqn:vertex_samp} only picks up elements specified by the sampling set $\mT$: Its computation cost is negligible. In contrast, a sampling filter in the vertex domain requires $\mathcal{O}(N^2)$ complexity in general. This can be reduced using a localized filter: $P$-hop filtering requires $\mathcal{O}(P|\mathcal{E}|)$ complexity \cite{Hammon2011}.

\subsubsection{Recovery}
As mentioned in the above two sections, all correction and reconstruction filters in our generalized sampling framework have diagonal graph frequency responses, the computational complexity of which is $\mathcal{O}(K)$. When we change the sampling filter, the response of the correction filter can be immediately calculated because it is diagonal. An additional complexity is required for the inverse GFT (where its complexity depends on the GFT used) if reconstructed vertex domain signals are required.

For signal recovery using vertex domain sampling, a matrix of size $N\times K$ is multiplied by the sampled coefficients, the complexity of which is $\mathcal{O}(NK)$. Note that the reconstruction matrix depends on $\mT$: If the sampling set $\mT$ or sampling rate $|\mT|$ is changed, we have to re-calculate the entire reconstruction matrix even if the graph is the same. This calculation typically requires the inversion of a $K \times K$ matrix, e.g., \cite{Chen2015, Anis2016, Sakiya2019a}, and may lead to $\mathcal{O}(K^3)$ complexity.

\subsubsection{Preprocessing}
Graph frequency domain sampling considered in this paper requires the GFT matrix, and hence we need to compute an eigendecomposition of the variation operator. This typically requires $\mathcal{O}(N^3)$ complexity (whereas we can use several fast computation methods of GFT or spectral decomposition of the graph variation operator such as in \cite{LeMag2018, Lu2019, Heimow2017, Heimow2018, Heimow2018a}). It is important to note that the same graph is often used numerous times. In such a case, we only need to calculate the GFT basis once, and can reuse it even when we change the sampling rate $M = N/K$ or the sampling filter.

Vertex domain sampling always requires computing the best $\mT$ from a given graph. The computation complexity highly depends on the sampling set selection methods. Major methods are compared in \cite{Sakiya2019a}. Typically, the complexity depends on $N$, the (assumed) cutoff graph frequency, and the edge density.
\subsection{Literature Review}
In \cite{Chepur2018}, a generalized sampling method for graph signal processing was studied. 
As the results did not assume any particular structure on the input signals and sampling filters, the recovery procedures were in general given by matrix inversions. Here, we focus on a special case of \cite{Chepur2018} that extends SI sampling to the graph setting and enables explicit expressions for the recovery filter in the graph Fourier domain.

Our solution represented in \eqref{eqn:fhat_subspace_unc} allows for a broad choice of $S(\li)$ and $A(\li)$. In particular, $A(\li)$ is not restricted to a bandlimiting operator. If we have $S(\bLam) = A(\bLam) = G_{\text{BL}, K}(\bLam)$, our solution reduces to that of \cite{Chepur2018}, which is equivalent to the sampling theory with graph frequency domain sampling studied in \cite{Tanaka2018}.

For the smoothness prior, if the smoothness is measured by $\bV^2 = \bL + G_{\text{BL}, K}(\bLam)$, i.e.,
\begin{equation}
\label{ }
V(\li) = \begin{cases}
\sqrt{\li + 1} & i \leq K -1\\
\sqrt{\li} & i\geq K
\end{cases}
\end{equation}
and $S(\bLam)= G_{\text{BL}, K}(\bLam)$, our solution also reduces to that introduced in \cite{Chepur2018}: The reconstruction and correction filters as shown in \eqref{eqn:smooth_unc} exhibit the following form:
\begin{equation}
\label{ }
W(\li) = \frac{1}{\li + 1},\quad H(\li) = \li + 1
\end{equation}
for $\li \in [\la_0, \la_{K-1}]$, and $W(\li) = H(\li) = 0$ otherwise. This is a special case of \cite{Chepur2018} where the correction and reconstruction operators can be represented as spectral filters.

As mentioned in Section \ref{subsec:pgs_space}, many studies on graph sampling theory such as \cite{Anis2016, Chen2015, Marque2016} implicitly assume that the graph signal lies in the PGS subspace with a typical generator function $A(\li) = G_{\text{BL}, K}(\li)$. While their subspace is a special case of the PGS assumption, the sampling matrices are different from that in \eqref{eqn:S}. As described in Definition \ref{def:GD_vertex}, the simple subsampling $\bS^* = \bI_\mT$ has been used in many studies on graph sampling \cite{Anis2016, Chen2015}. In \cite{Marque2016}, aggregation sampling was used.
However, its sampling matrix in \eqref{eqn:G_agg} does not in general have a corresponding sampling expression in the graph frequency domain as that in \eqref{eqn:gft_sampling}.
This results in the requirement of matrix inversion even for recovering the bandlimited graph signal although the signal lies in a PGS subspace.

In summary, most studies on graph sampling theory require inversion of the sampling operator for their reconstruction framework. Moreover, they focused on the design problem for the nonuniform sampling matrix $\bI_\mT$ that, for example, maximizes the bandwidth with perfect recovery. In contrast, frequency domain sampling is utilized in this study as a counterpart of ``sampling by modulation'' in SI spaces, thus resulting in a symmetric structure, i.e., both the sampling and reconstruction steps can be represented as similar sampling and filtering operations. We also allow for a broader set of input signals and design criteria.

\section{Graph Signal Recovery Experiments}\label{sec:exp}
In this section, we validate the proposed generalized sampling through signal recovery experiments. First, we demonstrate that the correction and recovery filters described in Sections \ref{sec:subspace} and \ref{sec:smoothness} reduce the reconstruction error of non-bandlimited graph signals compared to the bandlimited sampling in the graph frequency domain \cite{Tanaka2018} and FastGSSS \cite{Sakiya2019a}, which is a state-of-the-art vertex domain sampling method. This reveals the MSE improvements of our generalized sampling for non-bandlimited graph signals over bandlimited or smoothness-based reconstruction. Sampling under the bipartite case presented in Section \ref{subsec:bpt} is then conducted in which full-band graph signals are almost perfectly recovered with one branch of sampling and reconstruction, even without calculating the GFT matrix.

\begin{table*}[t]
\caption{Average MSEs of Reconstructed Signals after 1000 Independent Runs (in Decibels). Columns with BL Refer to Bandlimited Sampling, and Those with Non-BL Refer to Non-Bandlimited Sampling.}\label{tab:experiments}
\centering
\begin{tabular}{c|c||r|r|r|r||r|r|r|r}
\hline
& & \multicolumn{4}{c||}{Generator function \#1 \eqref{eqn:A_exp}} & \multicolumn{4}{c}{Generator function \#2 \eqref{eqn:A_exp2}}\\\cline{3-10}
&Solution/&\multicolumn{2}{c|}{Noiseless signals}&\multicolumn{2}{c||}{Noisy signals}&\multicolumn{2}{c|}{Noiseless signals}&\multicolumn{2}{c}{Noisy signals}\\\cline{3-10}
Prior&Strategy&\multicolumn{1}{c|}{BL}&\multicolumn{1}{c|}{Non-BL}&\multicolumn{1}{c|}{BL}&\multicolumn{1}{c||}{Non-BL} & \multicolumn{1}{c|}{BL}&\multicolumn{1}{c|}{Non-BL}&\multicolumn{1}{c|}{BL}&\multicolumn{1}{c}{Non-BL}\\\hline
Subspace&Unconstrained&-296.65&-302.56&-9.66&-10.54 & -299.84&-303.80&-10.22&-10.35\\
&Predefined: DS and MX&-18.52&-18.52&-9.00&-9.92&-12.16&-12.16&-7.91&-8.29\\
&Predefined: LS*&-13.82&-18.45&-7.84&-9.92&-7.25&-12.10&-5.12&-8.26\\\hline
Smoothness&Unconstrained&-3.99&-20.30&-3.14&-10.06&-8.80&-18.67&-6.58&-9.87\\
&Predefined: MX&-5.14&-18.49&-3.95&-9.92&-8.33&-12.15&-6.14&-8.28\\\hline\hline
\multicolumn{2}{c||}{BL sampling and reconstruction \cite{Tanaka2018}}&\multicolumn{2}{c|}{-3.99}&\multicolumn{2}{c||}{-3.86}&\multicolumn{2}{c|}{-8.80}&\multicolumn{2}{c}{-8.41}\\
\multicolumn{2}{c||}{FastGSSS \cite{Sakiya2019a}}&\multicolumn{2}{c|}{-2.31}&\multicolumn{2}{c||}{-1.85}&\multicolumn{2}{c|}{-7.18}&\multicolumn{2}{c}{-5.88}\\\hline
\multicolumn{10}{l}{* Same as the predefined solution for smoothness prior with LS strategy}\\
\end{tabular}
\end{table*}%

\subsection{Recovery Experiments for Bandlimited and Non-bandlimited Sampling}
We first conduct signal recovery experiments following our generalized sampling framework shown in Fig. \ref{fig:generalized_sampling}. Because we have two choices of priors, i.e., subspace and smoothness priors; three strategies, i.e., DS, LS, and MX; and two possible reconstruction filters, i.e., unconstrained or predefined, we compare these settings throughly in this simulation. In addition, two sampling filters are considered: bandlimited and non-bandlimited. The bandlimited sampling corresponds to a graph sampling theory described in \cite{Tanaka2018}, whereas our generalized sampling recovers the original full-band graph signals after bandlimited sampling. As mentioned previously, we allow for non-bandlimited sampling filters to obtain the sampled signal $\hat{\bmc}$ while still guaranteeing perfect recovery under certain conditions. Here, we also demonstrate this property in this subsection.

The graph used is a random sensor graph with $N = 256$. We downsampled the input signal by a factor of two such that $K=32$.
We used the following functions:

\begin{itemize}
  \item Generator function.
\begin{numcases}{A(\li) =}
1-\li/(\lambda_{\max}+\epsilon) & \text{Function \#1} \label{eqn:A_exp}\\
\exp(-1.5\li/\lambda_{\max})& \text{Function \#2} \label{eqn:A_exp2}
\end{numcases}
  \item Sampling functions.
\begin{equation}
S(\li) =\begin{cases}
G_{\text{BL},K}(\li)&\text{for bandlimited sampling}\\
G_{\text{IR}}(\li)&\text{for non-bandlimited sampling.}
\end{cases}\label{eqn:exp_samplingfilters}
\end{equation}
  \item Reconstruction function (used only for the predefined solutions).
\begin{equation}
\label{ }
W(\li) = \cos\left(\frac{\pi}{2} \cdot \frac{\li}{\lambda_{\max}+\epsilon}\right)
\end{equation}
  \item Smoothness function (used only for the smoothness prior).
\begin{equation}
\label{ }
V(\li) = \li/\lambda_{\max} + 1.
\end{equation}
\end{itemize}
We set $\epsilon = 0.1$. All functions are visualized in Fig. \ref{fig:functions_exp}. It is worth noting that both $A(\li)$ are not bandlimited; therefore, the original signal retains its full band. The generator function in \eqref{eqn:A_exp2} is smoother than \eqref{eqn:A_exp}: Here, \eqref{eqn:A_exp2} decays more rapidly than \eqref{eqn:A_exp} when $\la$ increases. Each element in the expansion coefficients $\hat{\bmd}$ is a random variable drawn from $\mathcal{N}(1, 1)$. Examples of $\boldx$ generated by $A(\li)$ in \eqref{eqn:A_exp} are shown in Fig. \ref{fig:experiments_samp}(a). For the proposed sampling, we perform two samplings to highlight the difference between the proposed sampling and the bandlimited sampling, as in \eqref{eqn:exp_samplingfilters}.

For comparison, we applied bandlimited signal recovery with GFT domain sampling \cite{Tanaka2018}: $S(\li) = W(\li) = G_{\text{BL},K}(\li)$ with no correction filter $H(\li) = 1$.
In addition, signal recovery using FastGSSS \cite{Sakiya2019a} is also conducted. FastGSSS assumes that $\boldx$ is smooth on a graph. It determines the sampling set $\mT$ and applies a recovery based on a polynomial graph filter with a given kernel. FastGSSS perfectly recovers the original signal when the polynomial order approaches infinity if the signal is bandlimited.

We applied $1,000$ independent runs and calculated the average MSEs. We then repeated the experiments using zero-mean Gaussian noise with variance $\sigma^2 = 0.1$ added to $\boldx$.

\begin{figure}[tp]
   \centering
   \includegraphics[width = \linewidth]{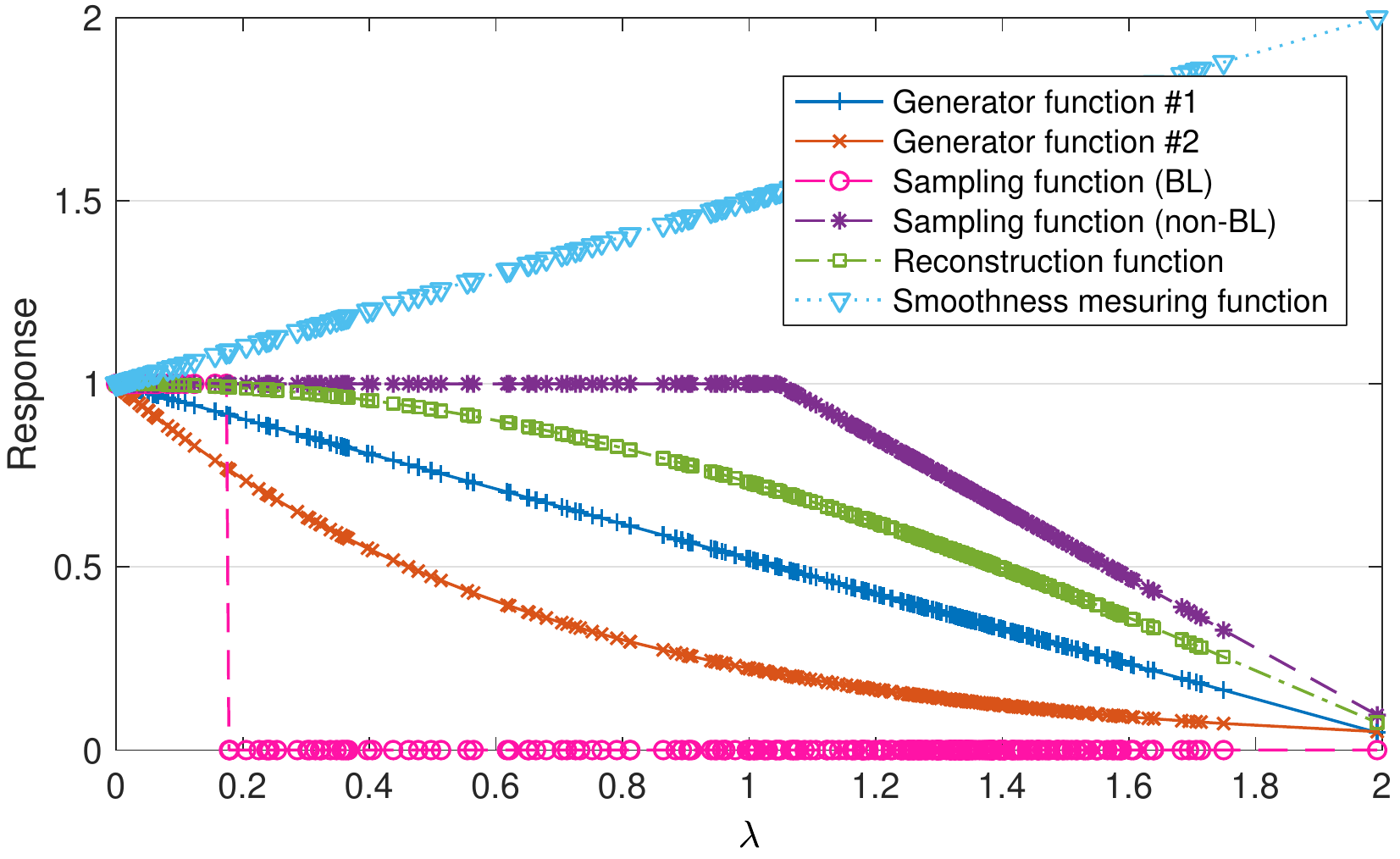} 
   \caption{Spectral responses of several functions used for the experiments.}
   \label{fig:functions_exp}
\end{figure}

\begin{figure*}[tp]
\centering
\subfigure[][Original]{\includegraphics[width=.19\linewidth]{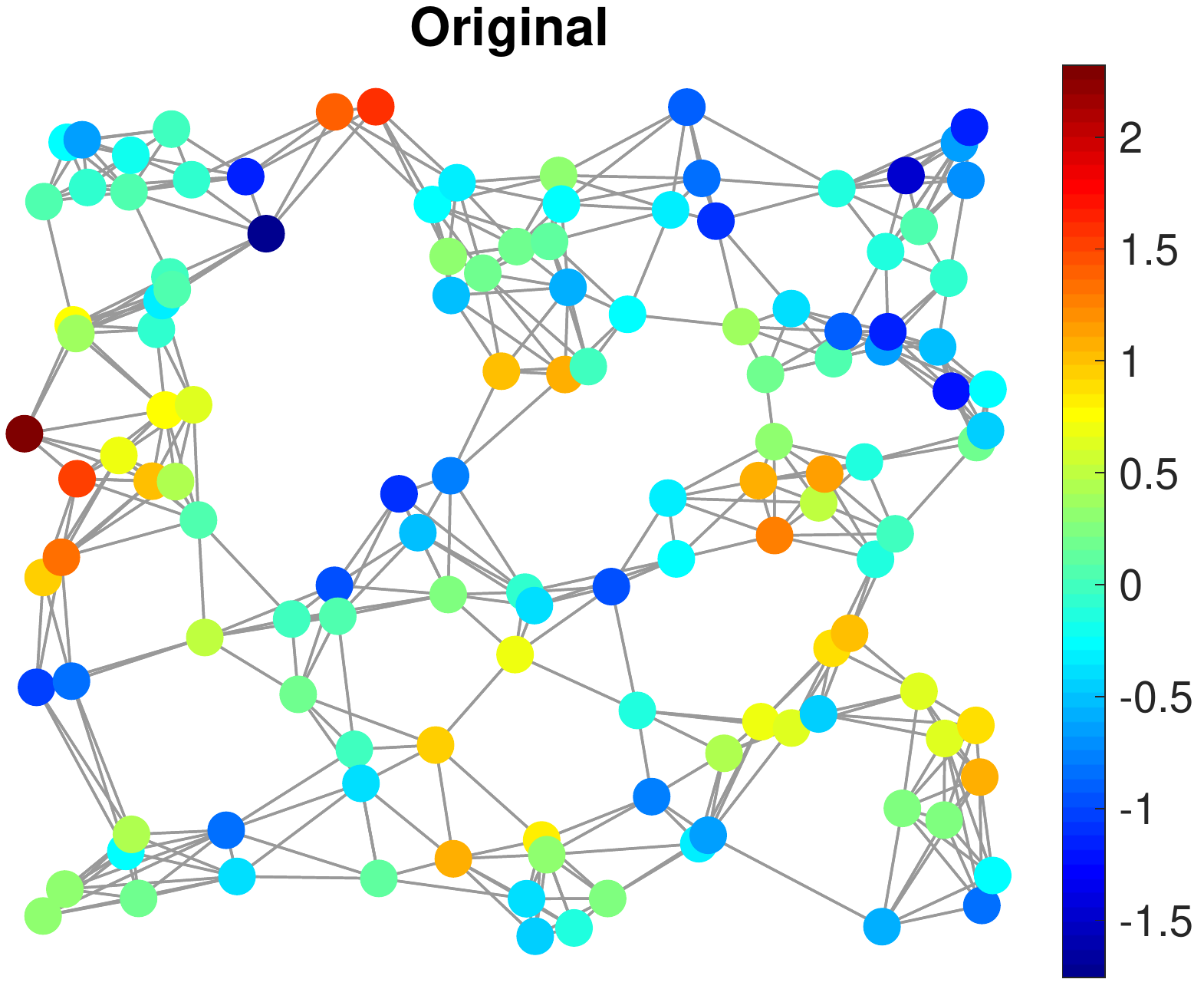}}\ 
\subfigure[][BL samp. + recon. \cite{Tanaka2018}]{\includegraphics[width=.19\linewidth]{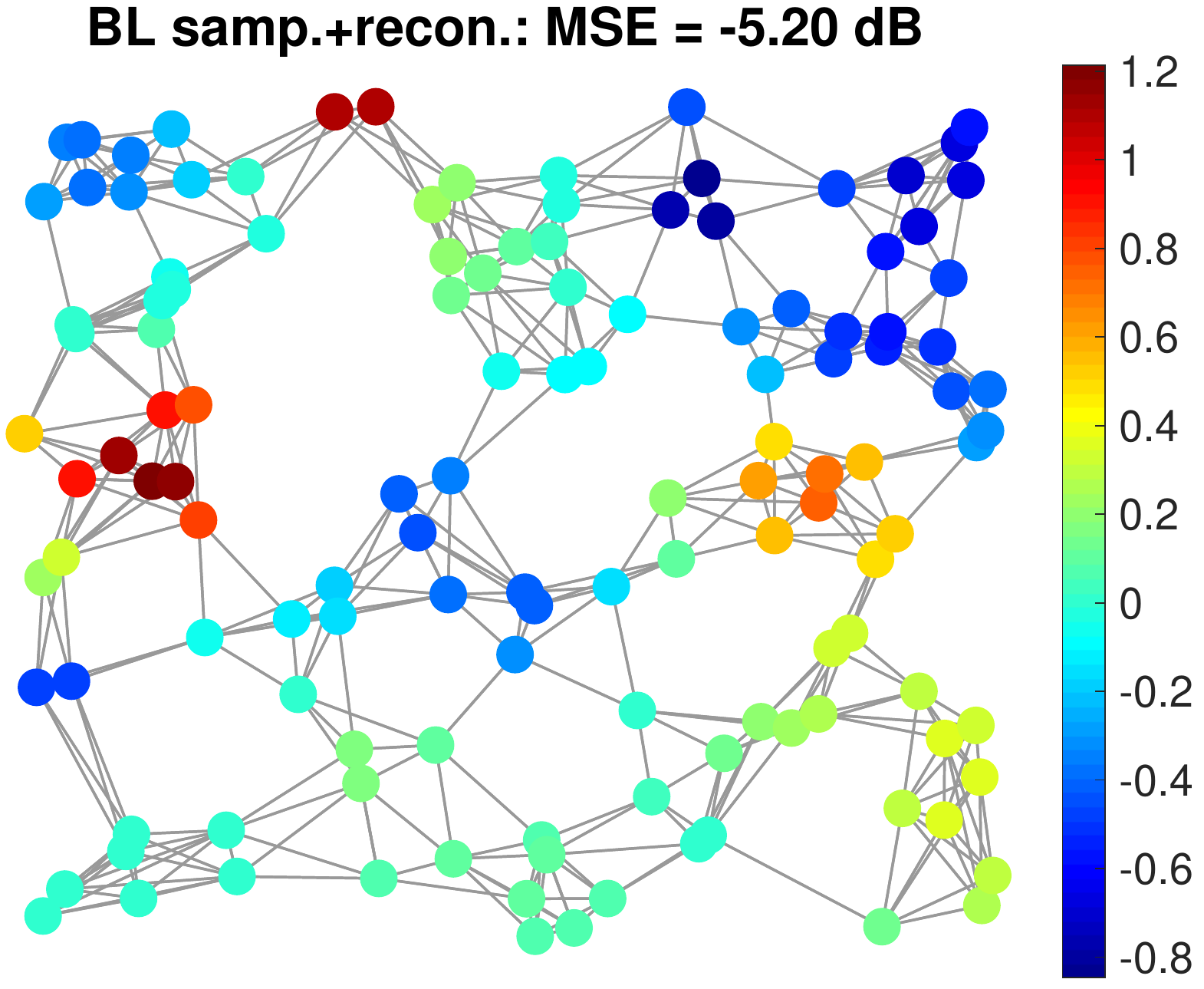}}\ 
\subfigure[][FastGSSS \cite{Sakiya2019a}]{\includegraphics[width=.19\linewidth]{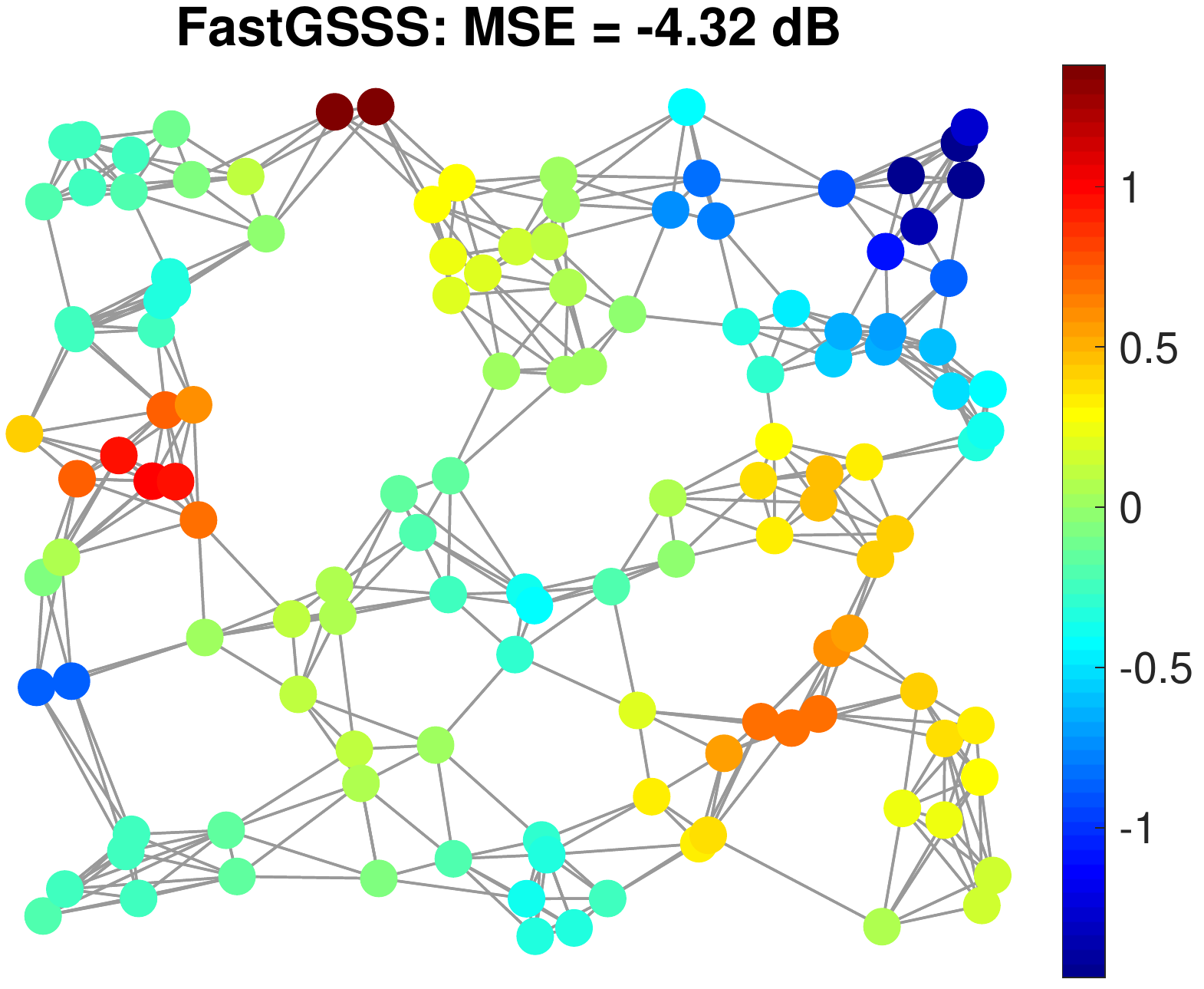}}\\
\subfigure[][Reconstructed: SS UNC (BL sampling)]{\includegraphics[width=.19\linewidth]{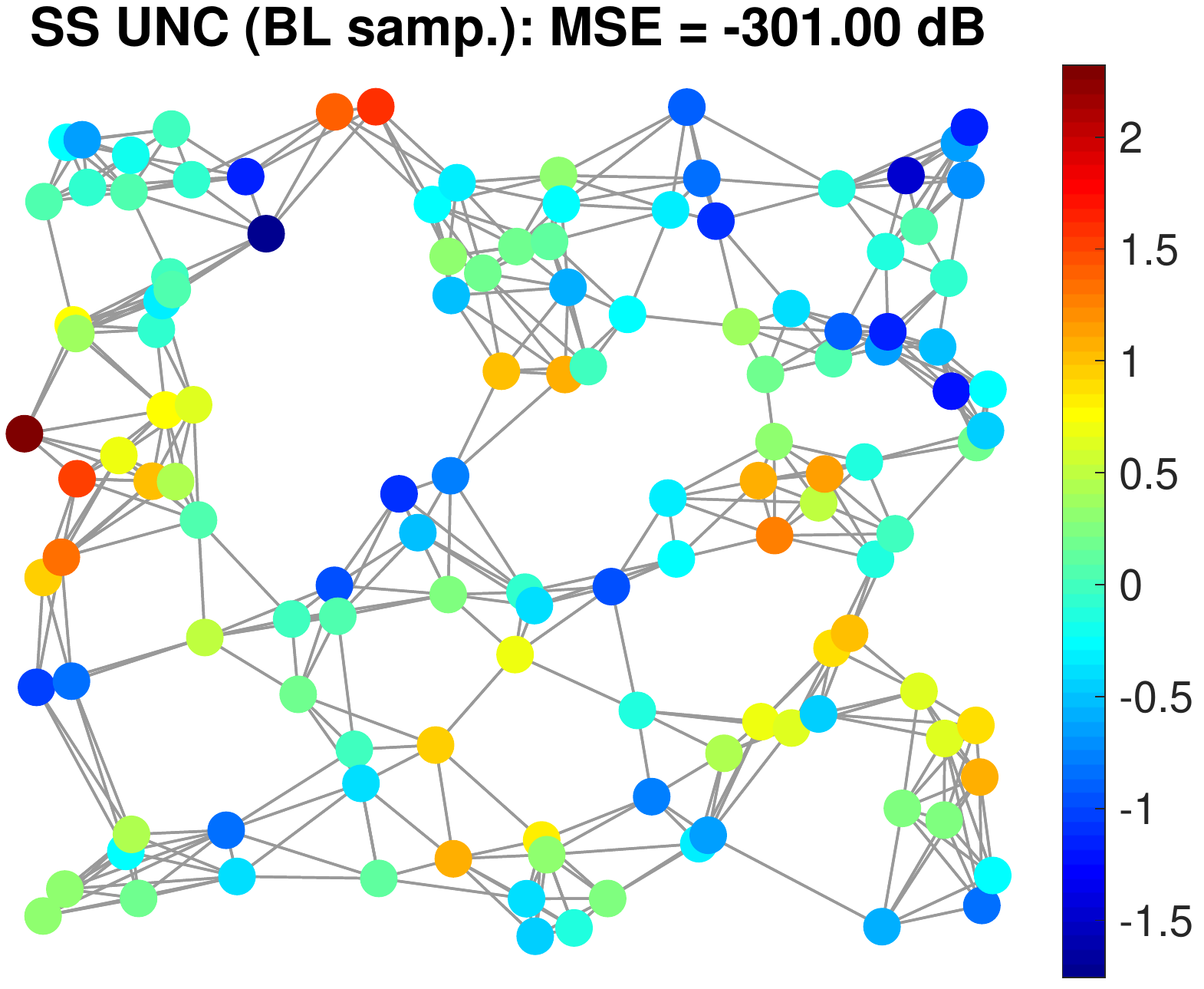}}\ 
\subfigure[][Reconstructed: SS PD DS/MX (BL sampling)]{\includegraphics[width=.19\linewidth]{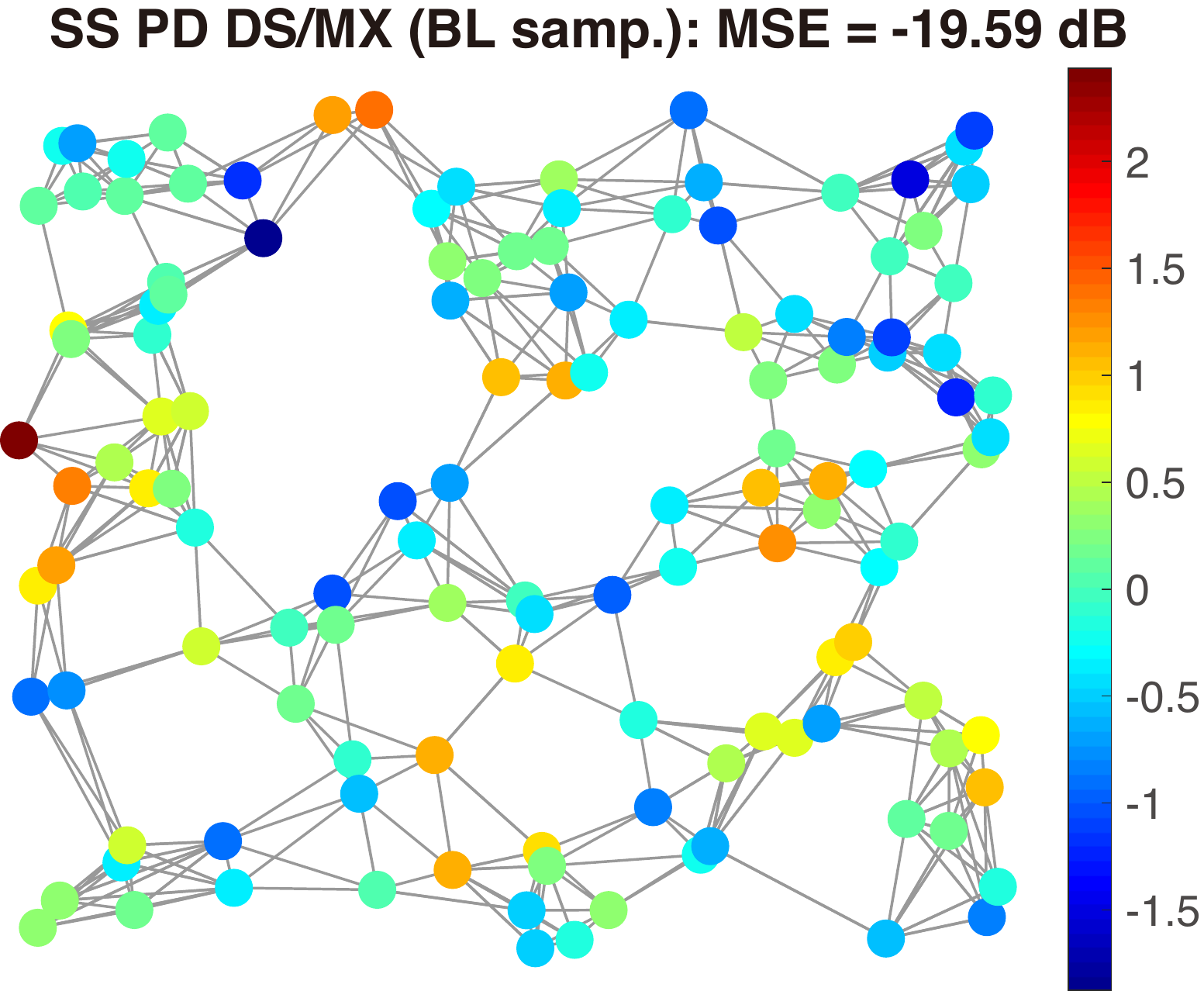}}\ 
\subfigure[][Reconstructed: SS PD LS (BL sampling)]{\includegraphics[width=.19\linewidth]{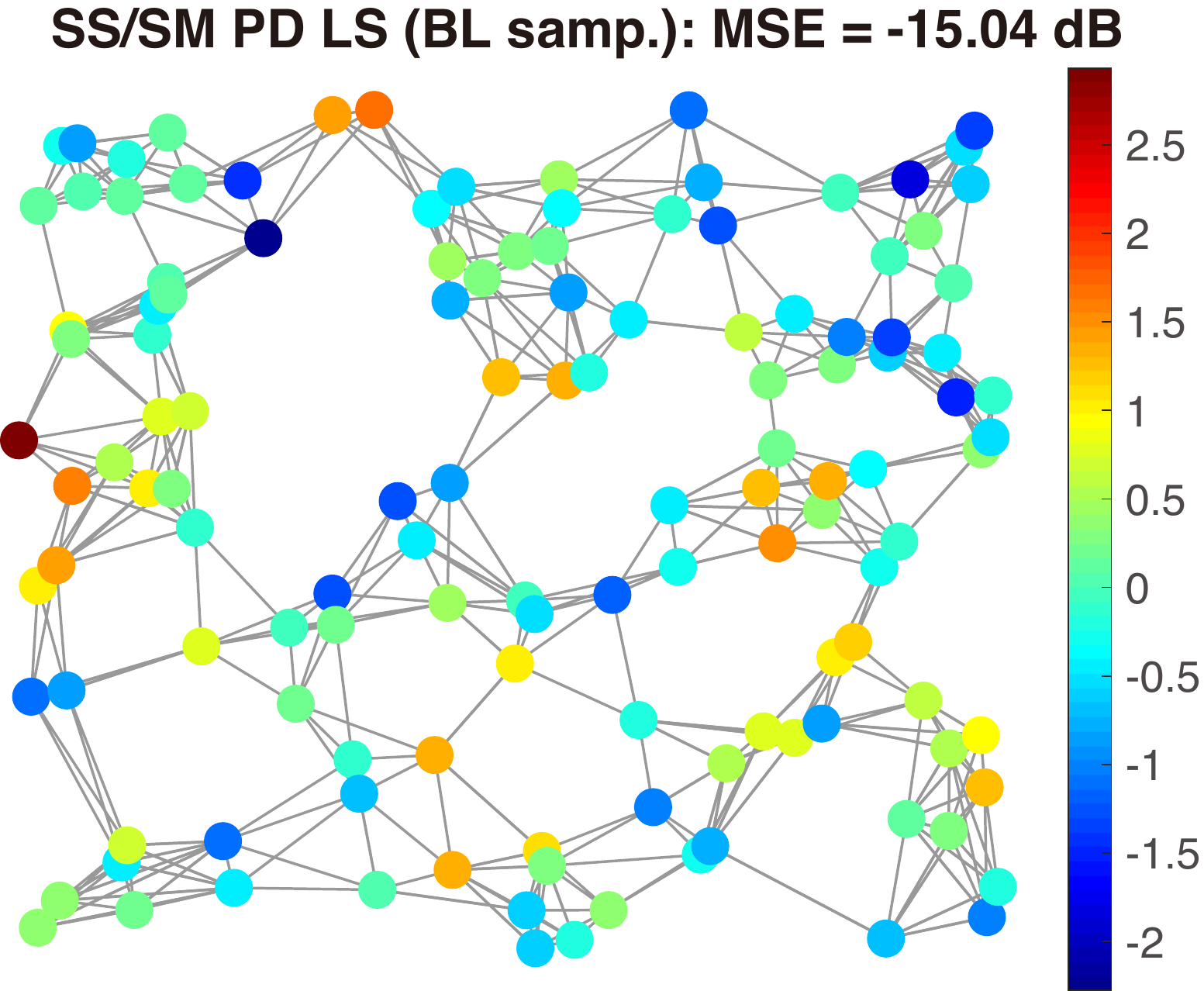}}\ 
\subfigure[][Reconstructed: SM UNC (BL sampling)]{\includegraphics[width=.19\linewidth]{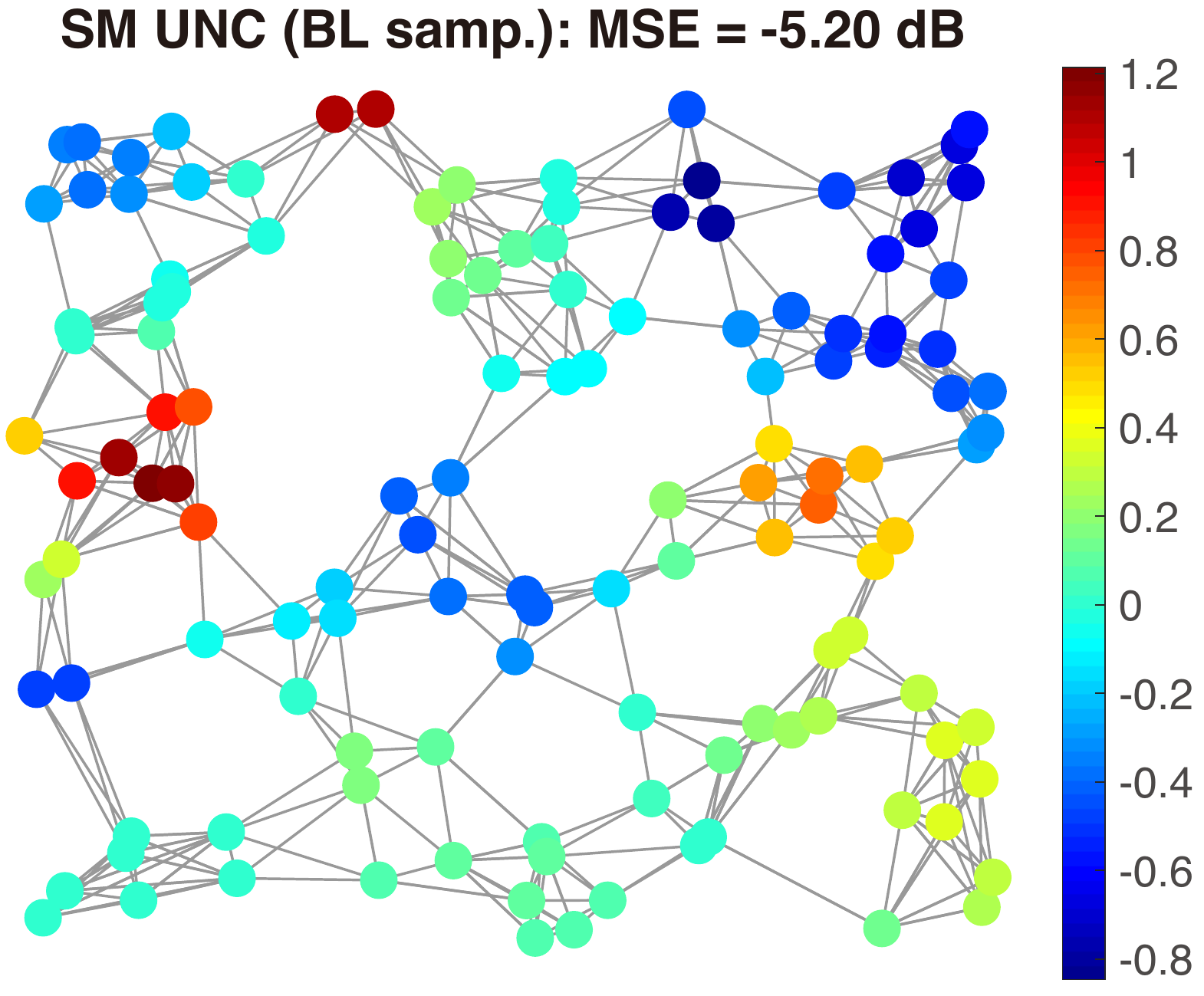}}\ 
\subfigure[][Reconstructed: SM PD MX (BL sampling)]{\includegraphics[width=.19\linewidth]{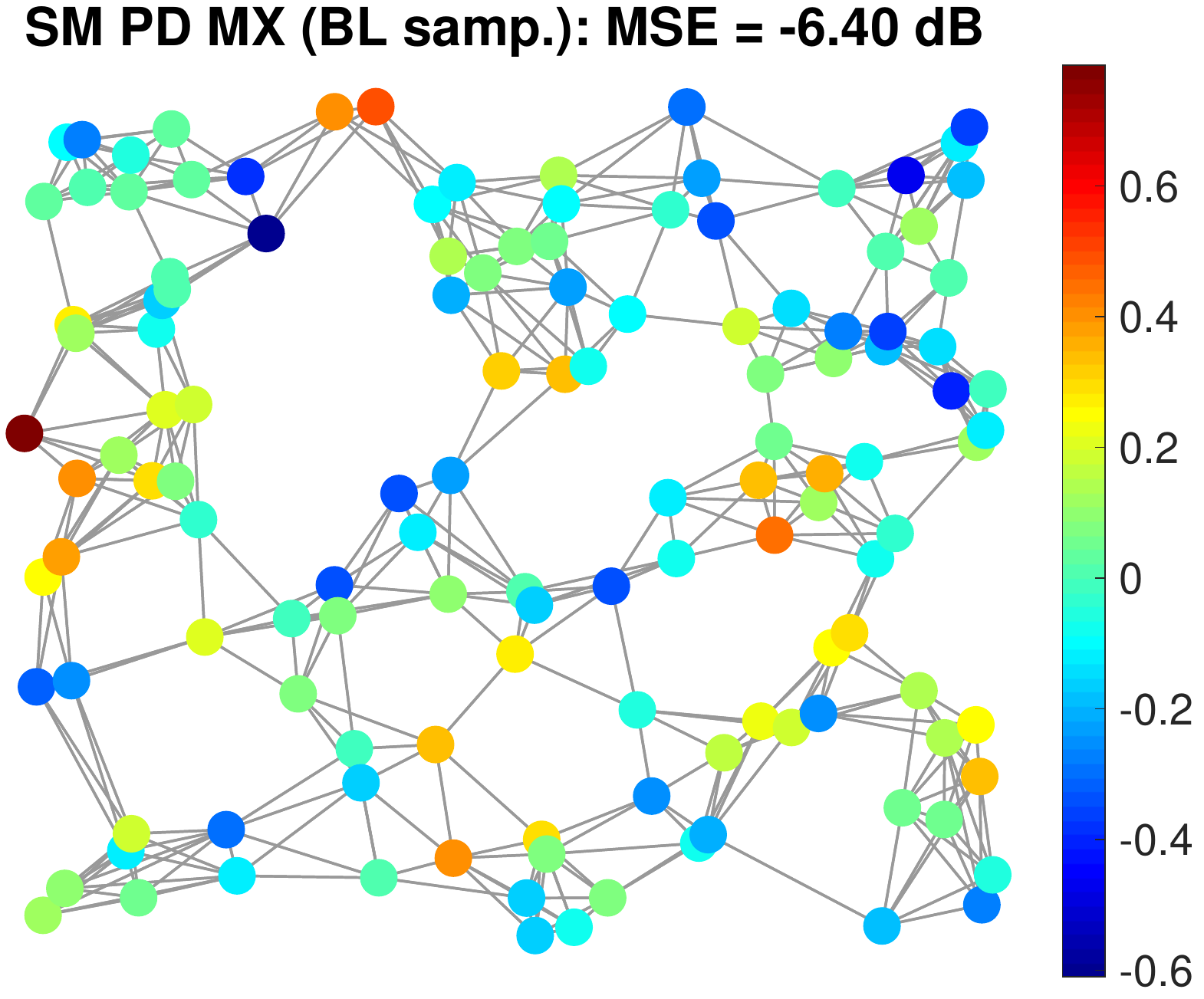}}\\
\subfigure[][Reconstructed: SS UNC (non-BL sampling)]{\includegraphics[width=.19\linewidth]{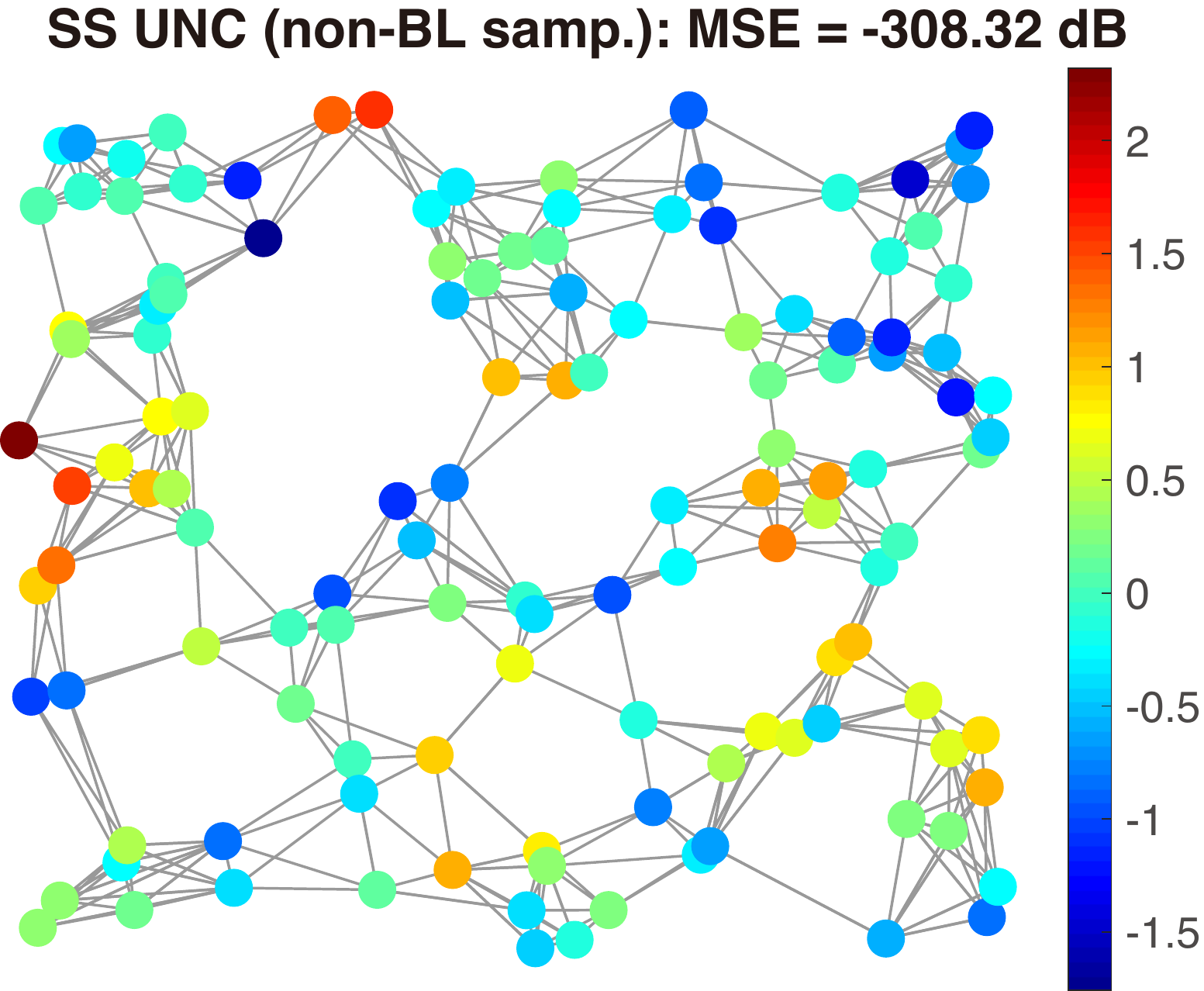}}\ 
\subfigure[][Reconstructed: SS PD DS/MX (non-BL sampling)]{\includegraphics[width=.19\linewidth]{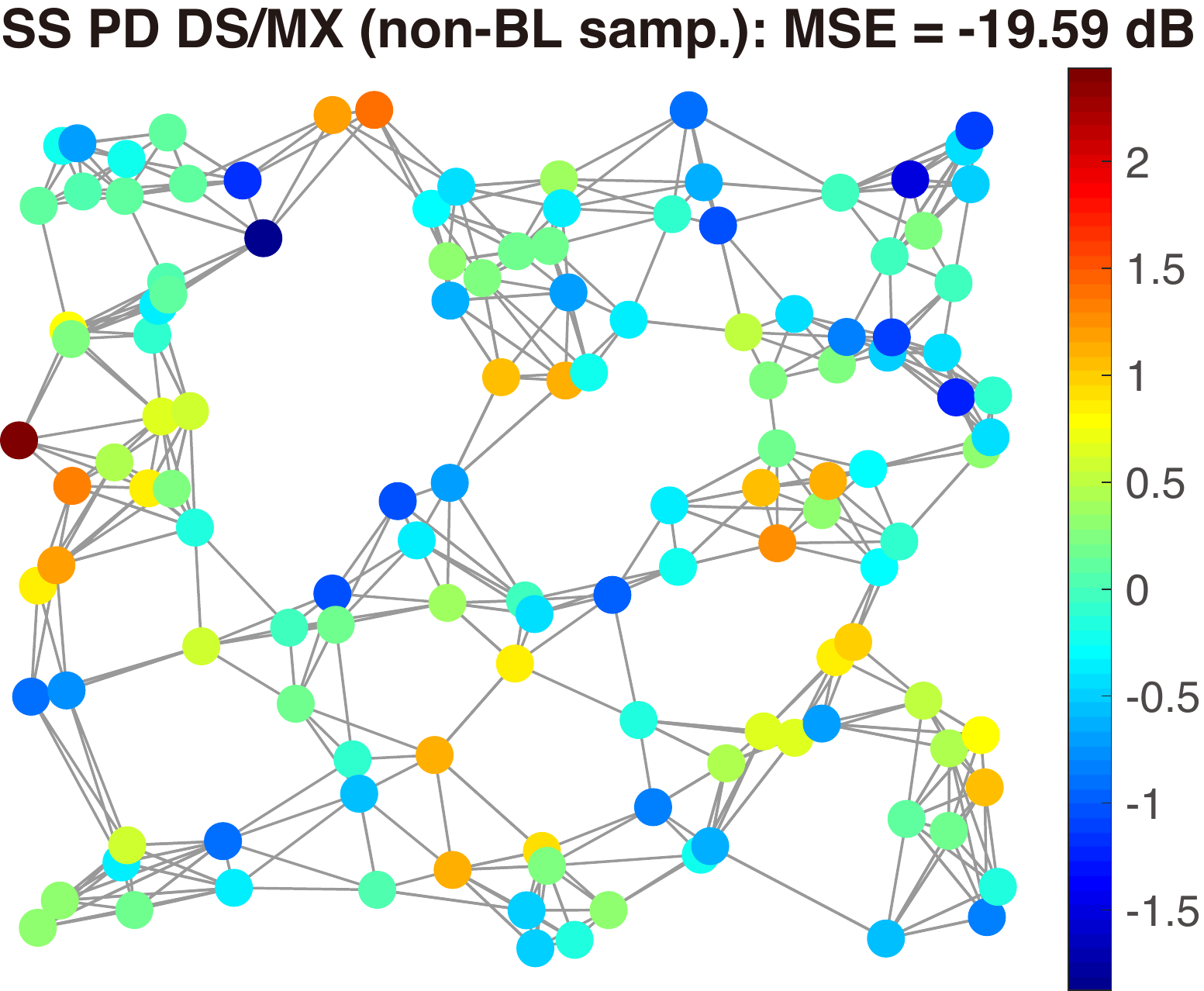}}\ 
\subfigure[][Reconstructed: SS PD LS (non-BL sampling)]{\includegraphics[width=.19\linewidth]{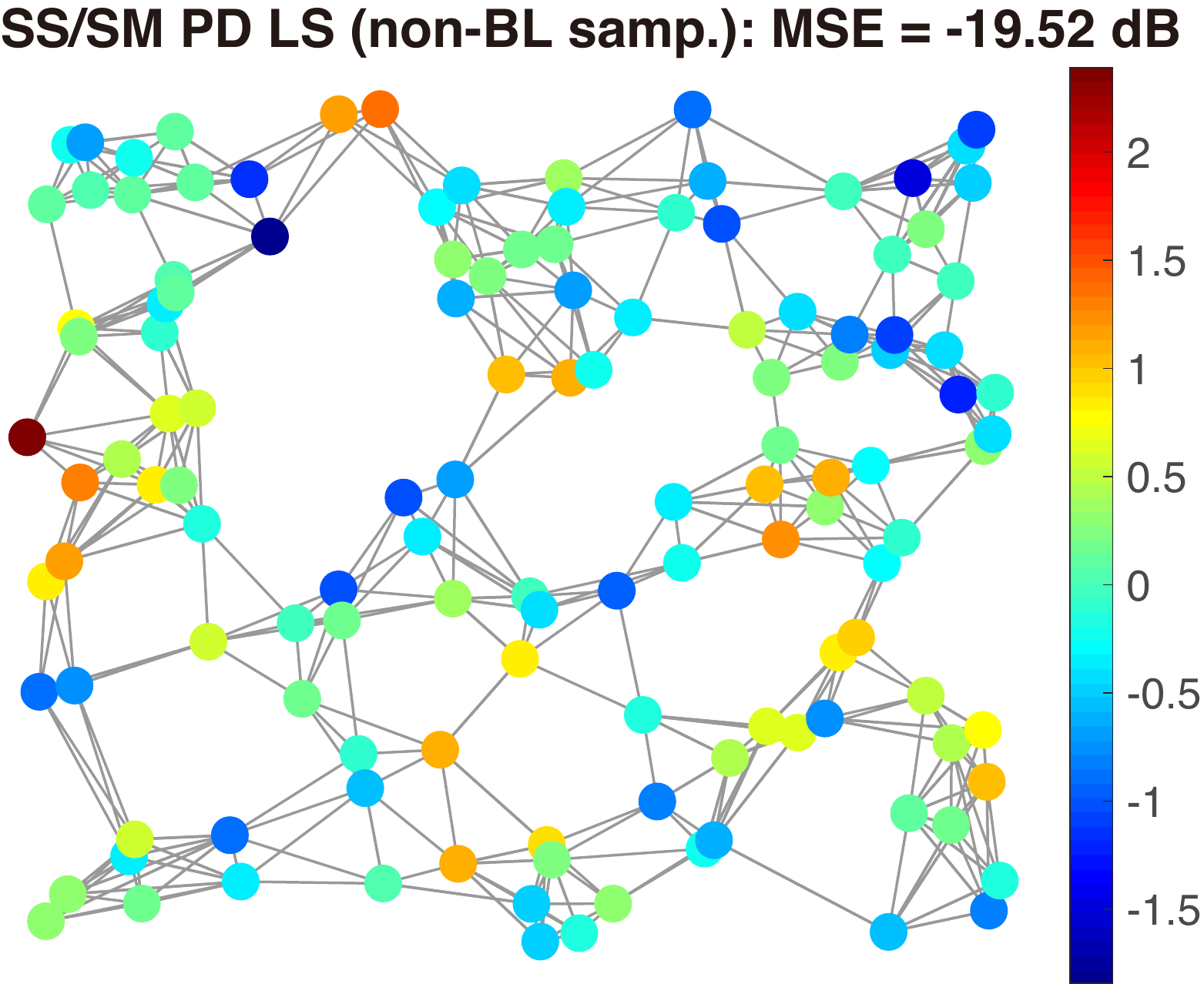}}\ 
\subfigure[][Reconstructed: SM UNC (non-BL sampling)]{\includegraphics[width=.19\linewidth]{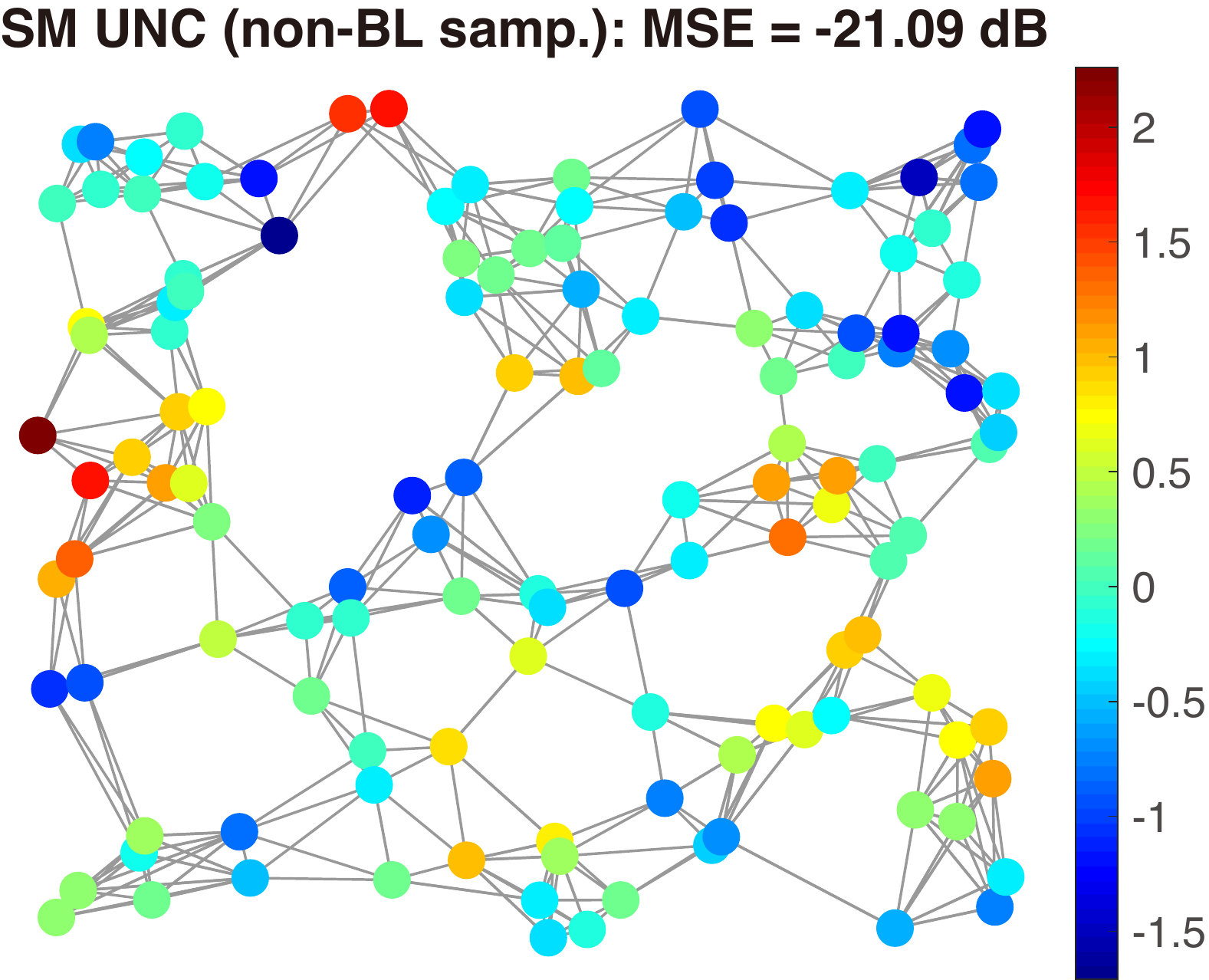}}\ 
\subfigure[][Reconstructed: SM PD MX (non-BL sampling)]{\includegraphics[width=.19\linewidth]{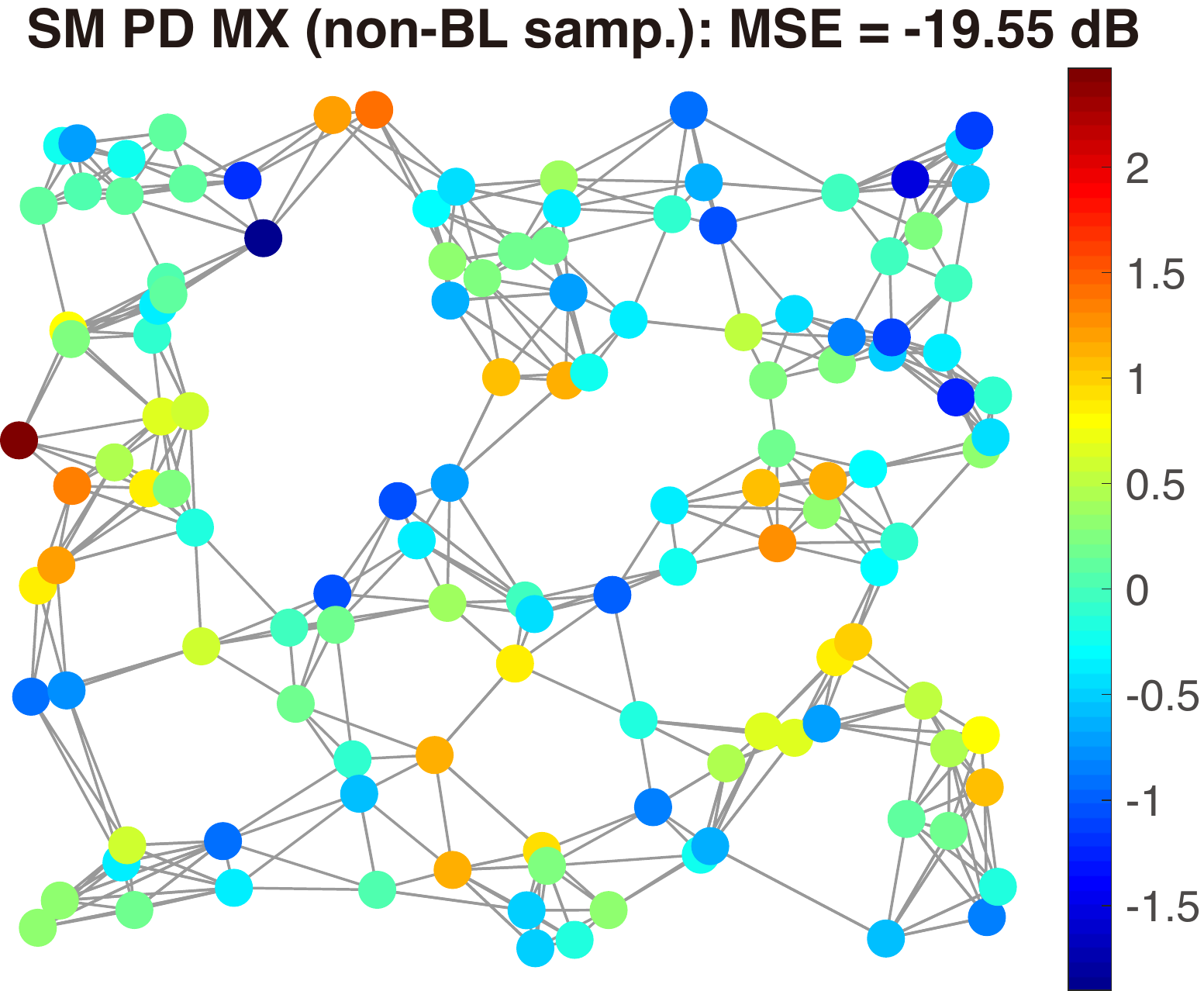}}
\caption{Signal recovery experiments for a random sensor graph. The generator function \#1 in \eqref{eqn:A_exp} is used. For visualization, we choose $N=128$ and $K=16$. Top row: The original signal and reconstructed signals with existing approaches. Middle row: Reconstructed signals with the proposed approach using bandlimited sampling. Bottom row: Reconstructed signals with the proposed approach using non-bandlimited sampling. SS and SM refer to the subspace and smoothness priors, UNC and PD refer to the unconstrained and predefined solutions.}
\label{fig:experiments_samp}
\end{figure*}

\begin{figure}[tp]
\centering
\subfigure[][Spectra for bandlimited sampling.]{\includegraphics[width=\linewidth]{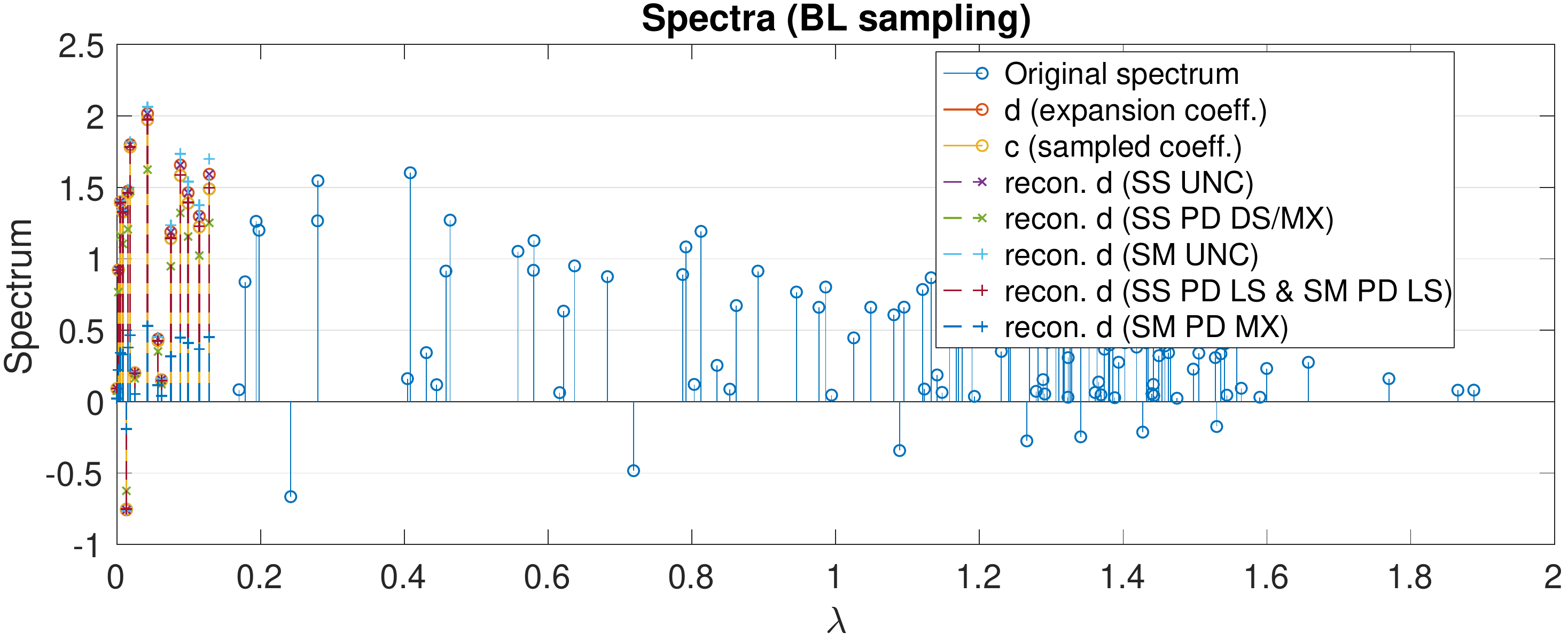}}\\
\subfigure[][Spectra for non-bandlimited sampling.]{\includegraphics[width=\linewidth]{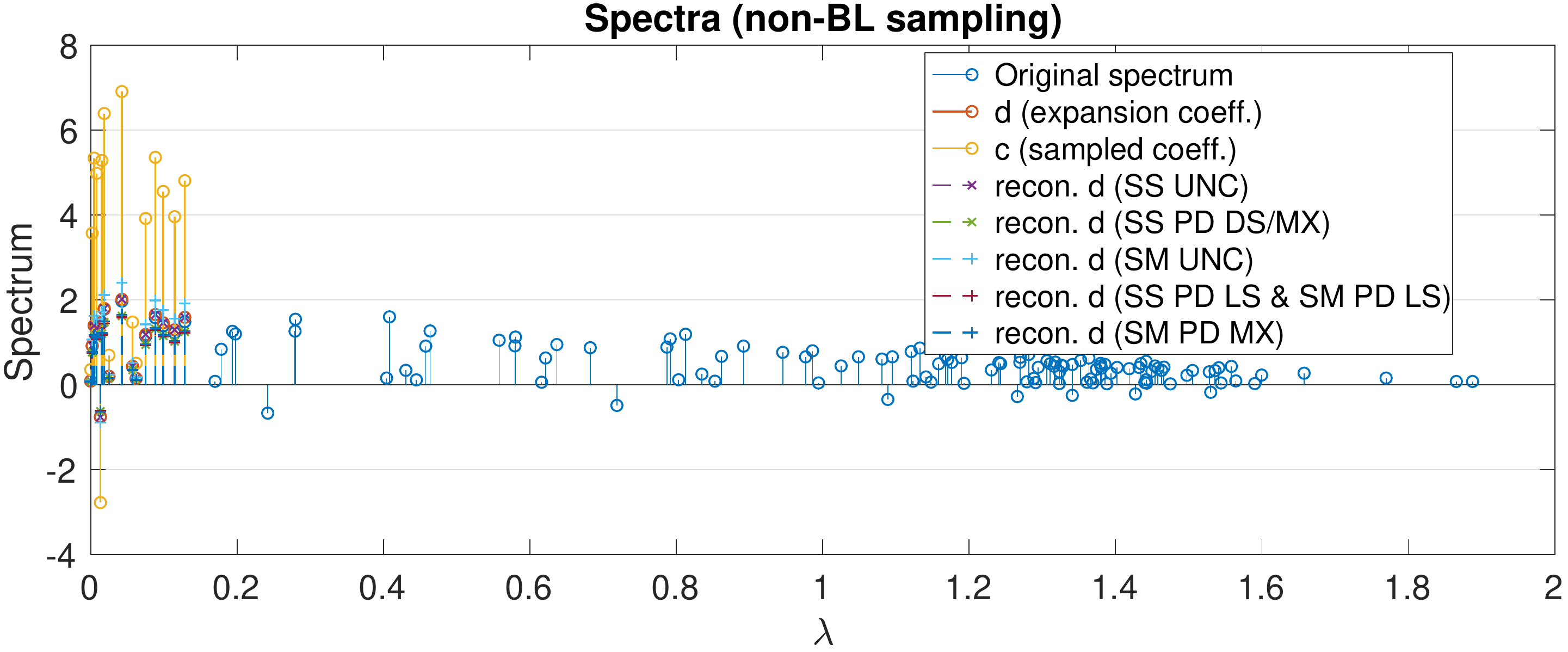}}\\
\caption{Original, sampled, and corrected spectra of Fig. \ref{fig:experiments_samp}.}
\label{fig:experiments_spectra}
\end{figure}

\begin{figure}[t]
   \centering
   \includegraphics[width = 0.8 \linewidth]{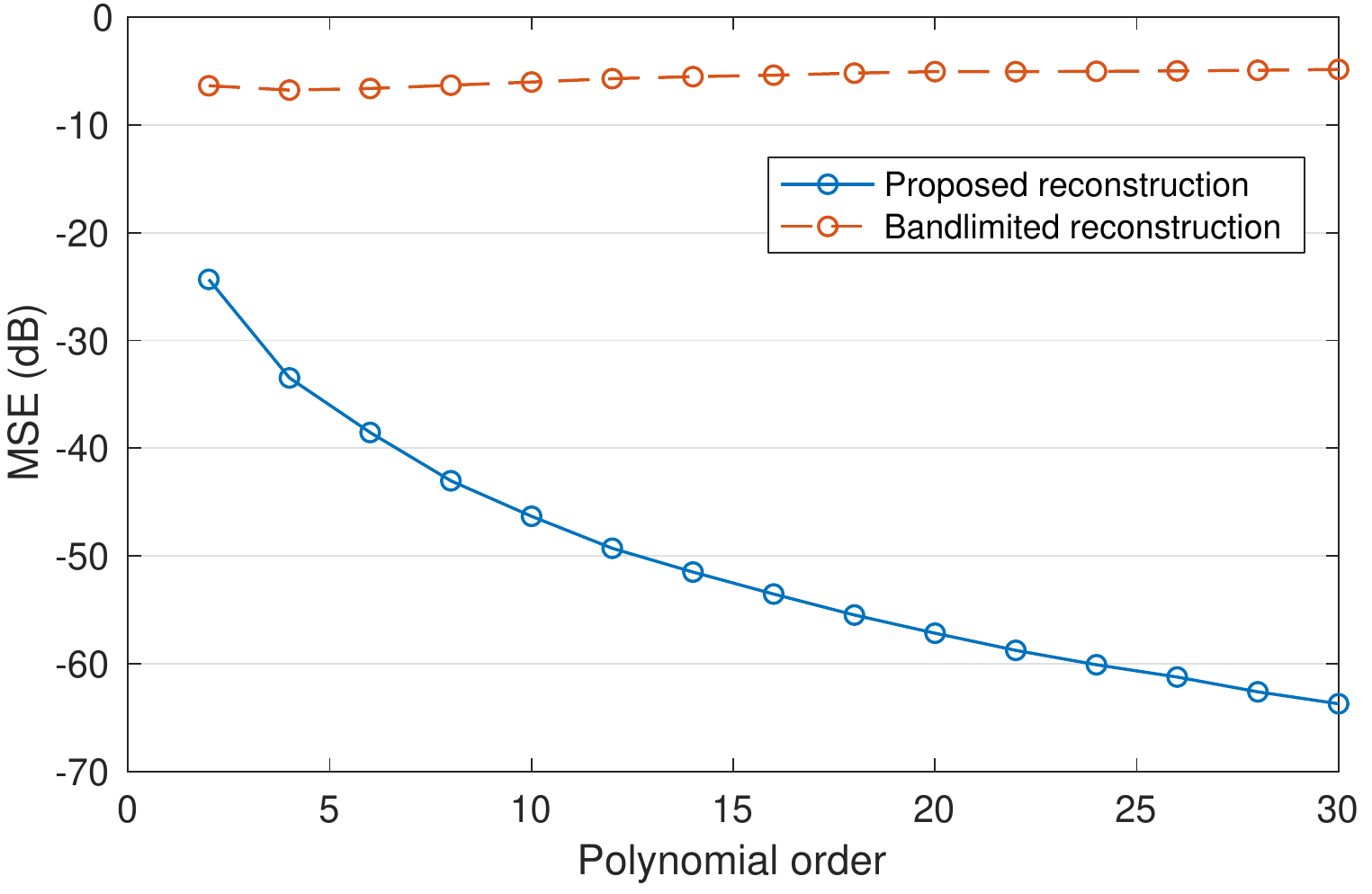} 
   \caption{MSE in recovery of signals on a bipartite graph. Here, $N = 256$ and the results are averaged over 100 independent runs.}
   \label{fig:bpt_mse}
\end{figure}

\begin{figure}[t]
\centering
\subfigure[][Original]{\includegraphics[width=.32\linewidth]{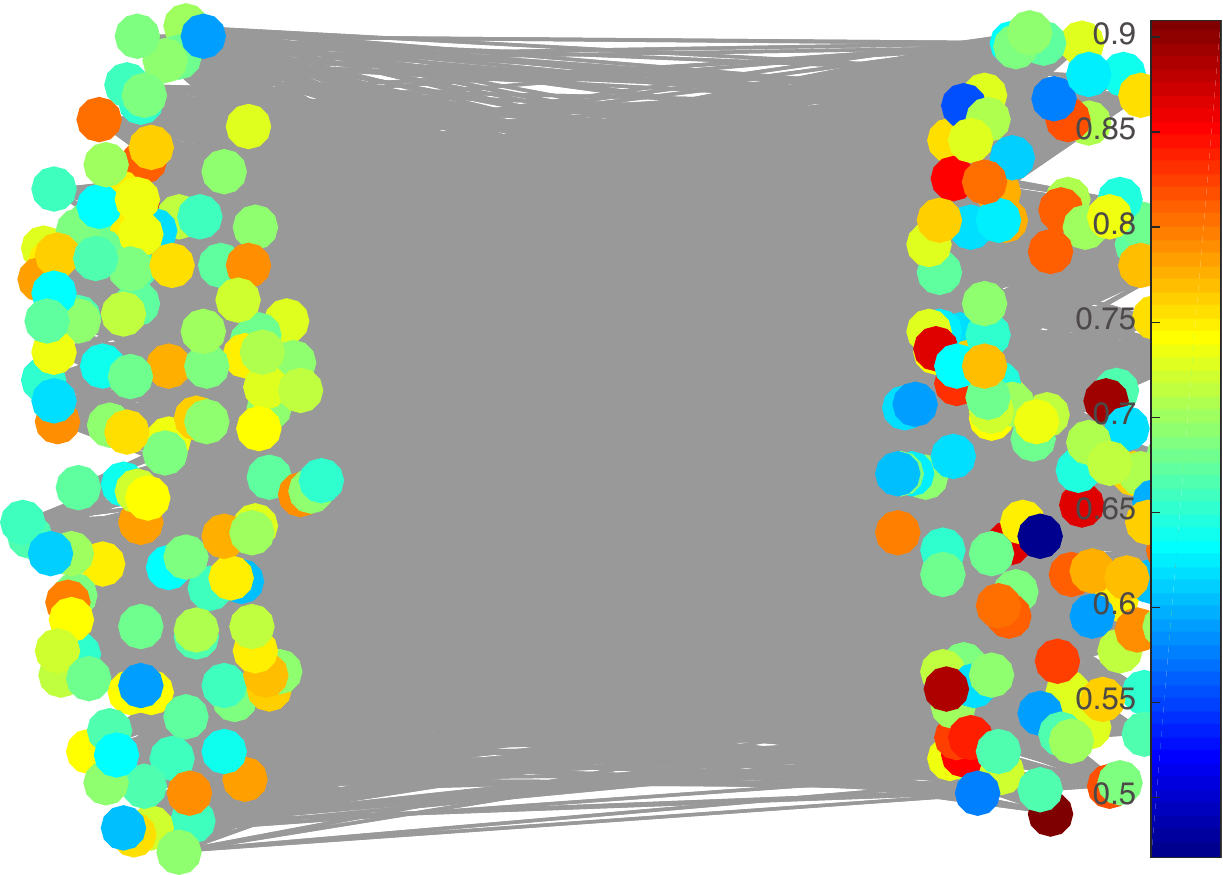}}
\subfigure[][Proposed reconstruction (MSE: $-54.45$ dB)]{\includegraphics[width=.32\linewidth]{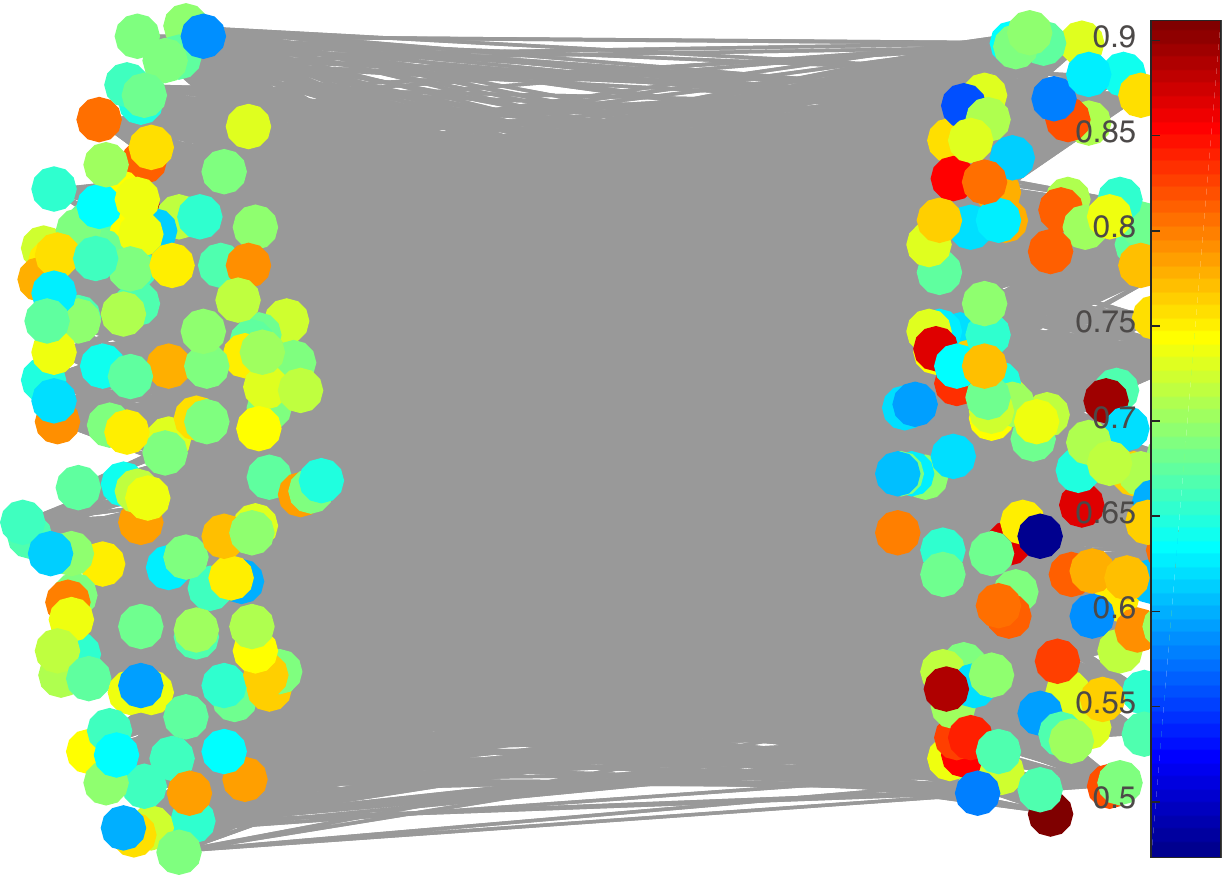}}
\subfigure[][Bandlimited reconstruction (MSE: $-25.95$ dB)]{\includegraphics[width=.32\linewidth]{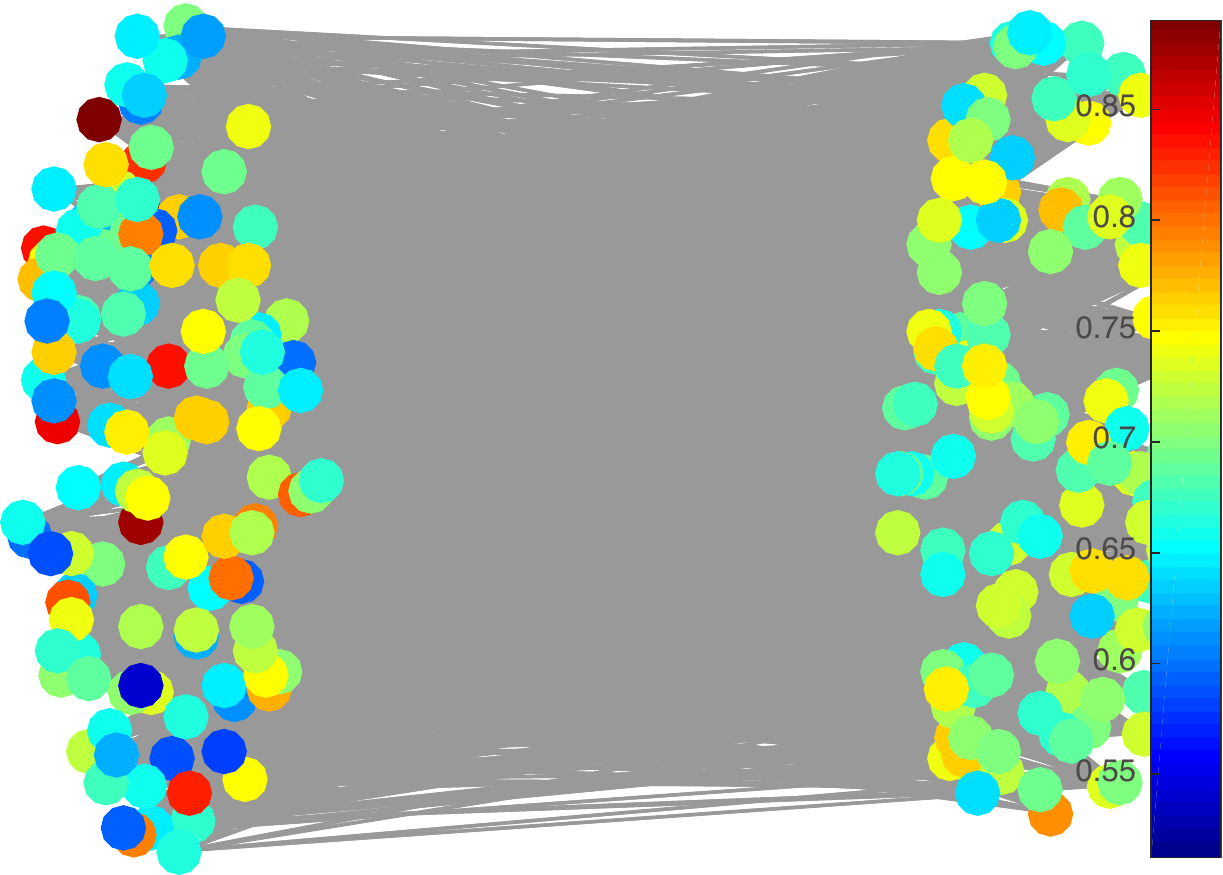}}
\caption{Original and reconstructed signals on a bipartite graph with $N = 256$. The right and left vertex sets correspond to $\mathcal{V}_1$ (retained) and $\mathcal{V}_2$ (discarded), respectively. The expansion coefficients $\bmd$ are drawn from $\mathcal{N}(0.25\times 10^{-2}, 1)$ for clear visualization. Chebyshev polynomial approximation of order $16$ is used both for $\widetilde{W}(\bLam)$ and $\widetilde{S}(\bLam)$.}
\label{fig:signals_bpt}
\end{figure}

Table \ref{tab:experiments} summarizes the average MSEs for various sampling approaches. The visualization of the reconstructed signals are shown in Fig. \ref{fig:experiments_samp}. Their corresponding spectra are also shown in Fig. \ref{fig:experiments_spectra}.

\textbf{Noiseless Signals: }
The unconstrained solution for the subspace prior perfectly recovers the original signal with machine precision. The predefined solutions for both the subspace and smoothness priors contain some reconstruction errors; however, in most cases, they are much smaller than those in the bandlimited reconstruction, especially for the generator function \#1 in \eqref{eqn:A_exp}. Because the signals generated by \eqref{eqn:A_exp2} are smoother than those of \eqref{eqn:A_exp}, the recovery with the smoothness prior is comparable to the bandlimited reconstruction if bandlimited sampling is applied. By contrast, using the non-bandlimited sampling function, all predefined solutions are superior to the bandlimited sampling/reconstruction. The unconstrained solution with a smoothness prior \eqref{eqn:smooth_unc} yields the same results as those in bandlimited sampling and reconstruction when using $G_{\text{BL},K}(\li)$ as the sampling filter.

FastGSSS uses vertex domain sampling and assumes smoothness (or bandlimitedness) of the signals. Its MSE is slightly worse than the predefined solutions and comparable to the bandlimited sampling/reconstruction using spectral domain sampling. This is because the signals used in the experiment are not fully bandlimited. As shown in Fig. \ref{fig:experiments_samp}(c), its reconstructed signal is over-smoothed compared to the original signal in Fig. \ref{fig:experiments_samp}(a). In fact, the sampling set selection strategy of FastGSSS (and any other reconstruction method utilizing vertex domain-based sampling) is based on the assumption that the signal is sufficiently smooth, i.e., a special case of the smoothness prior described in Section \ref{sec:smoothness}. This leads to existing recovery methods based on graph sampling theory interpolating missing values with a smooth graph filter, which thus have difficulty using the recovery of non-bandlimited signals even when $M$ is increased. As mentioned in Section \ref{subsec:complexity}, recovery based on vertex domain sampling generally requires a matrix inversion whose computational complexity is typically $O(K^3)$, and the entire reconstruction matrix needs to be re-calculated even when the sampling set $\mT$ is slightly changed. In contrast, our reconstruction matrix only differs in a diagonal matrix $\bH$ even when the sampling rate or sampling filter is changed. Therefore, $\bH$ can be easily re-calculated, as long as sampling is applied to the same graph.

\textbf{Noisy Signals: }
All methods contain increased errors for noisy cases, as expected. The unconstrained solution for the subspace prior demonstrated a significantly worse performance than that of a noiseless case because it did not assume any smoothness of the reconstructed graph signals. The predefined filters, both with subspace and smoothness priors, have demonstrated a performance close to those of unconstrained solutions because their reconstruction filters yield a smooth signal. 

Bandlimited reconstruction occasionally outperforms generalized sampling because it removes the noise in the high-graph-frequency band. Therefore, the bandlimited reconstruction for the smoother signals in \eqref{eqn:A_exp2} is comparable to generalized sampling with non-bandlimited sampling. However, for wider-band signals, as in \eqref{eqn:A_exp}, generalized sampling is much better than a bandlimited method even for noisy cases.

FastGSSS is inferior to spectral domain sampling-based approaches in most cases as in the experiment for noiseless signals. Because such a smoothing filter is naturally robust to noise, MSEs of FastGSSS are stabilized for noisy cases (but are still inferior to GFT domain sampling).

\subsection{Recovery Experiment on Bipartite Graphs with Vertex Domain Sampling}\label{subsec:exp_bpt}
We next demonstrate the recovery of non-bandlimited graph signals from vertex domain sampling, as described in Sections \ref{subsec:bpt} and \ref{subsec:subspace_unc}. To the best of our knowledge, this example is the first attempt to recover full-band graph signals from vertex domain operations without utilizing a multi-band decomposition.

In the signal recovery of this experiment, we set $S(\bLam) = G_{\text{BL}, N/2}(\bLam)$ and $A(\li) = W(\li) = G_{\text{IR}}(\li)$, as described in Section \ref{subsubsec:ss_unc_bpt}. That is, the original signal is full-band, whereas the sampled signal is low-pass filtered, i.e., the high-graph-frequency components are discarded after sampling. We assume that the generator information is losslessly available for recovery. The sampling and recovery of this situation is formulated in \eqref{eqn:recon_bpt}. If graph filters $\mathbf{G}$ and $\bW'$ can be represented as vertex domain filters, all operations will be performed in the vertex domain, i.e., without applying GFT (or an eigendecomposition of the variation operator). The filters, i.e., $S(\bLam)$ and $W'(\bLam)$ in \eqref{eqn:recon_bpt}, cannot be represented as vertex domain filters in general. Fortunately, they can be approximated as vertex domain operators by utilizing polynomial approximations of the spectral filter responses. In this experiment, we use a Chebyshev polynomial approximation (CPA) \cite{Hammon2011, Shuman2011}. As described in Section \ref{sec:SGT}, the $P$th order a CPA of an arbitrary graph spectral filter corresponds to a vertex domain filter with $P$-hop localization.

The original signal $\boldx$ is obtained as follows\footnote{Note that the generation process in \eqref{eqn:bpt_gen} uses the non-polynomial $\bW'$.}:
\begin{equation}
\label{eqn:bpt_gen}
\boldx = \mathbf{W}' \bI_{\mathcal{V}_1}^\top
\bmd,
\end{equation}
where each element in $\bmd$ is a random variable drawn from a normal distribution $\mathcal{N}(1, 1)$.
Figure \ref{fig:bpt_mse} shows the average MSEs of the reconstructed signals after $100$ independent runs according to the polynomial order. For comparison, we also plot the MSE of bandlimited reconstruction in which we use $\widetilde{S}(\li) = \widetilde{G}_{\text{BL}, N/2}(\li)$ as the reconstruction filter, of which $\widetilde{\cdot}$ denotes the polynomial approximated filter. As shown in Fig.~\ref{fig:bpt_mse}, the reconstruction error decreases monotonically as $P$ increases. The reconstructed signals are also shown in Fig. \ref{fig:signals_bpt}. The bandlimited reconstruction yields large errors, whereas the proposed reconstruction exhibits extremely similar signal values as the original values.

Future studies on this type of generalized sampling, particularly for the non-bipartite case, is an interesting topic for future research.

\section{Conclusion}\label{sec:conclusion}
We proposed a framework for generalized sampling of graph signals. We assumed that graph signals lie in a PGS subspace, which extends the SI subspace in standard signal processing to the graph setting. Sampling is defined in the graph frequency domain. We considered two priors for the graph signals, subspace and smoothness priors, which are parallel to those studied for signals in SI subspaces. All filters used in our framework can be represented as graph spectral filters. Numerical experiments demonstrated that our proposed sampling can recover a class of sampled signals broader than that obtained through existing graph sampling theories. We also presented perfect recovery of non-bandlimited graph signals on bipartite graphs without explicit operations in the GFT domain.


\end{document}